\documentclass[11pt,a4paper]{article}

\usepackage[colorlinks,pdfstartview=FitH,citecolor=blue,linkcolor=blue,urlcolor=blue]{hyperref}

\usepackage{graphicx}
\usepackage{cite}

\usepackage{pgfplots}                      
\pgfplotsset{compat=1.18}       
\usepgfplotslibrary{fillbetween}           

\usepackage{tikz}
\usetikzlibrary{matrix,decorations.pathreplacing,positioning,calc, patterns,backgrounds, intersections}
\usepackage{hyperref}
\usepackage{amsmath,amsthm,amssymb,bm,relsize}
\usepackage{calrsfs}
\usepackage{scalerel}
\usepackage{verbatim}

\usetikzlibrary{decorations.markings}

\usepackage{float}

\usepackage{xcolor}
\newlength{\mynodespace}
\pgfplotsset{compat=1.12}

\usepackage{setspace}
\usepackage{multirow}
\usepackage{array}
\newcolumntype{x}[1]{>{\centering\arraybackslash\hspace{0pt}}p{#1}}

\usepackage{empheq}
\usepackage{wasysym}

\usepackage{caption}
\usepackage{subcaption}

\usetikzlibrary{arrows.meta}

\usetikzlibrary{knots}
\newcommand{\wt}{\textnormal{wt}}
\usetikzlibrary{spath3}
\usetikzlibrary{hobby}

\usepackage{enumitem}
\setitemize{itemsep=-1pt}
\setenumerate{itemsep=-1pt}

\usepackage{titling}
\usepackage[margin=2.8cm]{geometry}
\usepackage{pgfplots}
\pgfplotsset{width=10cm,compat=1.9}

\definecolor{myg}{RGB}{220,220,220}

\usepackage[capitalise]{cleveref}

\theoremstyle{definition}

\newtheorem{theorem}{Theorem}

\newtheorem{proposition}[theorem]{Proposition}
\newtheorem{lemma}[theorem]{Lemma}

\newtheorem{definition}[theorem]{Definition}
\newtheorem{example}[theorem]{Example}

\newtheorem{remark}[theorem]{Remark}

\newtheorem*{theorem*}{Theorem}

\newcommand{\numberset}{\mathbb}

\newcommand{\F}{\numberset{F}}
\newcommand{\R}{\numberset{R}}
\newcommand{\Z}{\numberset{Z}}
\newcommand{\mA}{\mathcal{A}}
\newcommand{\mC}{\mathcal{C}}
\newcommand{\mD}{\mathcal{D}}

\newcommand{\Aut}{\textnormal{Aut}}

\title{\textbf{A New Invariant for Prime Alternating Knots From Error-Correcting Codes}}

\usepackage{authblk}

\author{Altan B. K\i l\i\c{c}, Ruud Pellikaan, Alberto Ravagnani \\ 
Department of Mathematics and Computer Science, Eindhoven University of Technology, the Netherlands
\thanks{A. B. K\i l\i\c{c} is supported by the Dutch Research Council through grant VI.Vidi.203.045. and A. Ravagnani is supported by the Dutch Research Council through grants VI.Vidi.203.045, 
OCENW.KLEIN.539, 
and by the Royal Academy of Arts and Sciences of the Netherlands.}
\thanks{Emails: a.b.kilic@tue.nl, g.r.pellikaan@outlook.com, a.ravagnani@tue.nl.}}

\date{}

\begin{document}

\maketitle

\vspace{ 1 cm}
\begin{abstract}
This paper shows that the Alexander-Briggs code of a knot gives rise to a new invariant that distinguishes prime alternating knots. The restriction to prime alternating knots precisely follows from the fact that our approach relies on Tait's flyping theorem. We also provide examples where the new invariant succeeds in separating knots that the well-known invariants, such as some knot polynomials, fail.  
\end{abstract}

\medskip

\section{Introduction}
\label{sec:intro}

One of the most fundamental problems in knot theory is to determine whether two knots are equivalent \cite{kawauchi1996survey, crowell2012introduction}. Knot invariants are very powerful for tackling the ``negative problem'' of knot equivalence. Among the many classes, alternating knots form one of the most central and best understood families in knot theory that also admit associated graph constructions, which creates a connection between knot theory and combinatorial structures \cite{thistlethwaite1987}.

Although the very idea of a knot invariant is rooted in knot theory itself, the tools used to construct these invariants often come from other areas of mathematics, for example, algebra via polynomials and knot groups, see \cite{alexander1928topological,fox1953free,jones1985,jones1986} among others. In this paper, we consider a discipline that has only been explored in \cite{kilic2024knot} in this regard and provide an invariant using the main object of that field, namely, coding theory and error-correcting codes \cite{macwilliams1977theory,huffman2010fundamentals}. This provides further evidence that connecting coding theory to other areas of mathematics often leads to new discoveries and applications.

We adopt the setting of \cite{kilic2024knot} where a bridge
between coding theory and mathematical knot theory is constructed with a combinatorial flavor. In \cite{kilic2024knot}, error correcting codes that are coming from the colorings of knot and link diagrams are introduced, with a main focus on the colorings of the strands and regions, called Fox and Dehn colorings respectively, where the colors have values in a ring. In this paper, we consider Alexander-Briggs (AB) colorings, the colorings of the crossings of a knot diagram, and the corresponding AB code.

The main result of the paper shows that the ($\pm1$)-permutation equivalence class of the Alexander-Briggs code of a knot gives a new invariant that distinguishes prime alternating knots, and is stated as follows.

\begin{theorem}
\label{cor-code-alternating}
The ($\pm1$)-permutation equivalence class of the Alexander-Briggs code $\mA_{D}$ of a reduced Tait diagram $D$ of a prime alternating knot is 
an invariant of the knot.
\end{theorem}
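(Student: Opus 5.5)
The plan is to use Tait's flyping theorem, proved by Menasco and Thistlethwaite: any two reduced alternating diagrams of the same prime alternating knot are related by a finite sequence of flypes and planar isotopies of $S^2$. Since ($\pm1$)-permutation equivalence is an equivalence relation, it suffices to show that $\mA_D$ changes only up to ($\pm1$)-permutation equivalence under a single planar isotopy and under a single flype. So for two diagrams $D,D'$ related by one such move I will construct an explicit bijection between their crossing sets and signs $\epsilon_c\in\{\pm1\}$ mapping $\mA_D$ onto $\mA_{D'}$.

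First I recall how $\mA_D$ is built. The Alexander--Briggs colorings assign a value to each crossing so that, at each region of the diagram, a signed sum of the crossing values around that region vanishes. The sign at a crossing $c$ in a region $R$ is determined by the local picture, for instance by which of the four corners at $c$ the region $R$ occupies, following the checkerboard pattern. Thus $\mA_D$ is the kernel (or dual) of a region--crossing incidence matrix $M_D$ whose entries lie in $\{0,\pm1\}$. Rescaling a column by $-1$ and permuting columns, meaning relabeling crossings, preserves the equivalence class. Permuting rows, or replacing rows by combinations that span the same row space, does not change the code at all.

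Next I handle planar isotopies, including moves through the point at infinity. These do not change the combinatorial structure of crossings, regions and corners, so the matrix $M_D$ is unchanged up to row and column ordering. A mirror or orientation convention could flip all the corner signs at once, and that is harmless.

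The main step is the flype. Take a tangle $T$ with a crossing $c$ adjacent to it, and let the flype rotate $T$ by $\pi$ and move $c$ to the other side. The crossings inside $T$ correspond bijectively to crossings of $T'$, and $c$ corresponds to the new crossing $c'$. Regions away from the flype are unchanged. The regions inside $T$ are carried to regions of the rotated tangle. The rotation by $\pi$ about a horizontal axis exchanges over and under strands and mirrors the corner structure, so each crossing in $T$ keeps its incidences but its local signs may all be reversed. This is exactly what column multiplication by $-1$ can absorb, provided the reversal is consistent across all regions meeting that crossing.

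The regions bordering $T$ and $c$ need separate treatment. Two of them, the ones containing the bigon-like corners of $c$, change which side of $T$ they touch. I need to check that after the relabeling their rows still match, possibly after adding or subtracting rows. This is legitimate because only the row space matters.

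The main obstacle is this sign bookkeeping at the crossings of $T$ and at $c$. The relation has to hold simultaneously for every region containing such a crossing, and also for the outer regions that touch both $T$ and $c$. I would first try to express the sign of a crossing at a region via the checkerboard coloring and the alternating condition. Reducedness and alternation make these signs depend only on the color of the region. Then a flype, which preserves the checkerboard coloring outside $T$ and swaps it in a controlled way inside $T$, would change the signs by a global factor per crossing. If this fails, I would fall back to an explicit case check of the local pictures of a standard flype.
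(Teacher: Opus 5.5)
\textbf{Verdict: genuine gap.} Your reduction to a single flype via the Menasco--Thistlethwaite theorem is fine. The step that carries the argument, however, rests on a sign claim that is false.

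Label the corners at a crossing $c$ cyclically $Q_1,Q_2,Q_3,Q_4$, so that $Q_1,Q_2$ lie on one side of the overstrand and $Q_1,Q_4$ lie on one side of the understrand. Then $Q_1,Q_3$ have one colour and $Q_2,Q_4$ the other. The column of $c$ in the AB matrix $P(-1)$ is $\pm(\mathbf{1}_{Q_1}+\mathbf{1}_{Q_2}-\mathbf{1}_{Q_3}-\mathbf{1}_{Q_4})$. So every column has a $+1$ and a $-1$ in each colour class, and the signs are not functions of the colour alone.

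After the flype, the overstrand at $\rho(c)$ is the image of the old understrand. The checkerboard colouring of $T$ is preserved, not swapped. So the new column is $\pm(\mathbf{1}_{\rho Q_1}+\mathbf{1}_{\rho Q_4}-\mathbf{1}_{\rho Q_2}-\mathbf{1}_{\rho Q_3})$. Hence, if the black entries of the column change by a sign $\epsilon_c$, the white entries change by $-\epsilon_c$.

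The ``reversal consistent across all regions meeting that crossing'' that you need, in order to absorb the change into a column sign, therefore fails at every crossing of $T$. No signed bijection of crossings matches the rows region by region.

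The route can be repaired, but only with steps you do not supply. First, negate the rows of the white regions interior to $T$; this does not change the kernel. Then prove that all remaining affected rows lie in $\epsilon*(\text{old row space})$. More than two regions are involved:
\begin{itemize}
\item The white regions above and below $T$ exchange the side of $T$ they touch. Write $r_U$ for the row of a region $U$ and $n$ for the white region above $T$. On the columns of $T$, the new row of $n$ is $\epsilon*(r_n+\sum_W r_W)$, with $W$ running over the white regions interior to $T$. This uses the relation that the white rows sum to zero.
\item The black regions do not even correspond bijectively: $a=d$ splits, while $b$ and $c$ merge.
\end{itemize}

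The paper avoids this bookkeeping by working with the black graph only:
\begin{itemize}
\item It compares $\Gamma\setminus e$ with $\Gamma'\setminus e$; their codes are $\mC_T\oplus\mC_R$ and $\mC_{\perp}\oplus\mC_R$ by the one-vertex-sum lemma.
\item It compares $\Gamma/e$ with $\Gamma'/e$, which are the same graph up to edge directions.
\item It reconstructs $\mC_{\Gamma}=(x,y)*\mC_{\Gamma'}$ from the shortened and punctured codes at $e$.
\item It obtains the white code $\mC_{D^*}=\mC_D^\perp$ with the same sign vector, because $(x,y)^{-1}=(x,y)$.
\item It concludes for $\mA_D=\mC_D\cap\mC_D^\perp$.
\end{itemize}
The opposite sign you would meet on the white corners is exactly compensated by a Whitney twist of the white graph of $T$, whose two attachment regions are exchanged. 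The duality argument accounts for this automatically.
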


\noindent \textbf{Outline of the paper:} Section \ref{sec:pre} is devoted for the background needed in the paper, and it consists of two subsections, where the first is about Fox, Dehn and Alexander-Briggs colorings, and the second focuses on operations in graphs and codes we use throughout the paper. Section \ref{sec:Tait} discusses Tait's flyping conjecture and places it in the right context for our purposes. Section \ref{sec:newinvariant} contributes to knot theory by introducing a new invariant for prime alternating knots arising from coding theory. In Section \ref{sec:comparenewinv}, we compare our invariant with already established, well-known invariants and provide examples where it outperforms them. Finally, in Section \ref{sec:mutant}, we take a look at mutant knots and show that our invariant does not distinguish them.

\section{Preliminaries}
\label{sec:pre}
In this section, we recall preliminary results on knot colorings, and review certain operations on graphs and codes that are needed for the rest of the paper. We also establish the notation. 

We start with the definition of a knot and its diagram. (For the sake of conciseness, we refer to \cite{lickorish1997}, among many others, for the definitions of links and their equivalence.)

\begin{definition}
\label{def:knot}
A \textbf{knot} $K$ is a topological subspace of $\R^3$ (or $S^3$, see Remark \ref{rem:whyS3}) that is homeomorphic to the unit circle $S^1$. A \textbf{knot diagram} is a regular projection of a knot onto a plane, drawn in such a way (with little gaps) that one can distinguish whether the
knot passes over or under itself, see Figure \ref{fig:fig8}. Coloring the regions of a knot diagram with two colors, black and white, in a way that two regions have the same color at each crossing if and only if they are not adjacent, called the \textbf{checkerboard coloring}. 
\end{definition}

\begin{figure}[h!]
    \centering
\begin{subfigure}{.4\textwidth}
\centering

 \begin{tikzpicture}[scale= 1.8, use Hobby shortcut]
    \path[spath/save=figure8]
    ([closed]0,0) .. (1.5,1) .. (.5,2) ..
    (-.5,1) .. (.5,0) .. (0,-.5) .. (-.5,0) ..
    (.5,1) .. (-.5,2) .. (-1.5,1) .. (0,0);
    \tikzset{
    every spath component/.style={thick, draw},
    spath/knot={figure8}{8pt}{1,3,...,7}
    }
    \path (0,-.7);
    \end{tikzpicture}
    \caption{A figure-eight knot.}
    \label{fig:fig8}

\end{subfigure}
\hspace{0.1\textwidth}
\begin{subfigure}{.4\textwidth}
\centering
\begin{tikzpicture}[scale=0.8, pics/arrow/.style={code={%
  \draw[line width=0pt,{Computer Modern Rightarrow[line
  width=1pt,width=2ex,length=2ex]}-] (0.5ex,0) -- (-0.5ex,0);
  }},pics/rrarrow/.style={code={%
  \draw[line width=0pt,{Computer Modern Rightarrow[line
  width=1pt,width=2ex,length=2ex]}-] (-0.5ex,0) -- (0.5ex,0);
  }}]

\fill[gray!30] (0,2) -- (1,3) -- (-1,3)  -- cycle;
\fill[gray!30] (-7/3,-1/3) -- (-3,1) -- (-3,-1)  -- cycle;

\fill[gray!30] (-3.5,-2) -- (-3/2,-2) -- (-1,-3)  -- cycle;
\fill[gray!30] (0,-3) -- (3/4,-2) -- (3,-2) -- cycle;

\fill[gray!30] (15/7,-1/7) -- (3,1) -- (3,-1)  -- cycle;
\fill[gray!30] (0,2) -- (-7/3,-1/3) -- (-3/2,-2) -- (3/4,-2) -- (15/7,-1/7) -- cycle;

\begin{knot}[
clip width=0,
]
\strand[black,thick] (-3,-1) to pic[pos=0.45,sloped]{arrow} (1,3);
\strand[black,thick] (-1,3) to pic[pos=0.55,sloped]{arrow} (3,-1);
\strand[black,thick] (3,1) to pic[pos=0.5,sloped]{arrow} (0,-3);
\strand[black,thick] (3,-2) to pic[pos=0.5,sloped]{arrow} (-3.5,-2);
\strand[black,thick] (-1,-3) to pic[pos=0.5,sloped]{rrarrow} (-3,1);
\end{knot}

\tikzset{nnode/.style = {shape=circle,fill=myg,draw,inner sep=0pt,minimum
size=1.9em}}

\node[text=blue] at (-0.1,-0.3) {$U_{11}$};
\node[text=blue] at (0,2.7) {$U_1$};
\node[text=blue] at (1.5,1.5) {$U_2$};
\node[text=blue] at (2.77,0) {$U_3$};
\node[text=blue] at (2,-1.4) {$U_4$};
\node[text=blue] at (1.5,-2.25) {$U_5$};
\node[text=blue] at (-0.5,-2.5) {$U_6$};
\node[text=blue] at (-2.2,-2.3) {$U_7$};
\node[text=blue] at (-2.5,-1.5) {$U_8$};
\node[text=blue] at (-2.77,-0) {$U_9$};
\node[text=blue] at (-1.5,1.5) {$U_{10}$};

\node at (0,2.3) [circle, fill=black, inner sep=1pt] {};
\node at (-0.3,2) [circle, fill=black, inner sep=1pt] {};
\node at (2.4,-0.15) [circle, fill=black, inner sep=1pt] {};
\node at (2.1,0.15) [circle, fill=black, inner sep=1pt] {};
\node at (0.4,-2.2) [circle, fill=black, inner sep=1pt] {};
\node at (0.7,-1.8) [circle, fill=black, inner sep=1pt] {};
\node at (-1.8,-1.8) [circle, fill=black, inner sep=1pt] {};
\node at (-1.4,-1.8) [circle, fill=black, inner sep=1pt] {};
\node at (-2.6,-0.3) [circle, fill=black, inner sep=1pt] {};
\node at (-2.4,-0.6) [circle, fill=black, inner sep=1pt] {};

\end{tikzpicture}
    \caption{A Tait diagram of a knot with a checkerboard coloring.}
    \label{fig:Tait}
\end{subfigure}

\caption{A knot diagram and a Tait diagram.}
\label{fig:somediagrams}
\end{figure}

Reidemeister introduced three moves within knot diagrams \cite{reidemeister1927} that can be used to show their equivalence. 

\begin{definition}
\label{def:Reide_equiv}
Two diagrams $D$ and $D'$ are called \textbf{equivalent} if $D$ can be transformed into~$D'$ by using a finite sequence of Reidemeister moves. We denote this by~$D \approx D'$.
\end{definition}

Another way to distinguish whether the strand is an overstrand or an understrand is to use \textbf{Tait diagrams} where the gaps in the standard knot diagrams do not exist, but one uses two dots at each crossing of the knot (with a chosen orientation) placed next to the left 
hand side of an overstrand such that the first dot is placed just before and the second is placed just after the understrand. See Figure \ref{fig:Tait} for an illustration.

Throughout this chapter, all knot diagrams will be regular, reduced and alternating unless stated otherwise. See \cite[Definition 2.6, Definition 2.14]{kilic2024knot} for formal definitions. Moreover,~$R$ will always be a commutative ring with unity, and $t$ is a fixed invertable element of that ring.

Note that in Section \ref{sec:Tait} and Section \ref{sec:newinvariant}, knots will be considered in $S^3$, instead of $R^3$ as in Definition \ref{def:knot}, which is the one-point compactification of  $\R^3$.
The reason of this change will be made clear in Remark \ref{rem:whyS3}. For this section, we stick to the standard definition.

\subsection{Fox, Dehn and Alexander-Briggs Colorings}
\label{subsec:foxdehns}

The number of crossings of a knot diagram determines the number of strands and regions. 

\begin{lemma} (see \cite{alexander1928topological})
\label{lem:knot_diag}
Let $D$ be a knot diagram with $n$ crossings. Then it has $n$ strands and~$n+2$ regions.
\end{lemma}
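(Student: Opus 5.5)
The statement to prove is \cref{lem:knot_diag}: a knot diagram with $n$ crossings has $n$ strands and $n+2$ regions. I will treat the projection as a 4-valent plane graph and count strands and regions separately.

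\textbf{Graph model.} Assume $n\ge 1$; a diagram with no crossings is the trivial case, treated separately or by convention. Let $D$ have $n$ crossings, and regard its underlying projection as a planar graph $G$ whose vertices are the crossings. The edges of $G$ are the arcs of the projected curve between consecutive crossings. Every crossing has degree $4$, so the handshake lemma gives $|E| = 4n/2 = 2n$. The projection of a knot is the image of a single closed curve, so $G$ is connected; in particular it is a connected plane graph, possibly with loops or multiple edges.

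\textbf{Regions.} The regions of $D$ are exactly the faces of this embedding in the plane, counting the unbounded one. Euler's formula for connected plane graphs, $V - E + F = 2$, then gives $n - 2n + F = 2$, so $F = n+2$. The only care needed is that Euler's formula holds for plane multigraphs with loops, which is standard.

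\textbf{Strands.} A strand is an arc of the diagram running from one undercrossing to the next. Traverse the knot once along its parametrization by $S^1$. Each crossing is passed twice, once as an over-passage and once as an under-passage. Hence the closed curve is broken at exactly $n$ under-passage points, and since it is a circle, cutting a circle at $n\ge 1$ points yields exactly $n$ arcs. These arcs are precisely the strands, because over-passages do not break a strand.

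\textbf{Main obstacle.} The only subtle point is the degenerate setting. I must make sure that the regularity of the projection guarantees that crossings are transverse double points, so that the degree is $4$ and each crossing has exactly one under-passage. I also need to handle $n=0$ by convention, where there is one component and $2$ regions but "strands" is ill-defined. Everything else follows directly from Euler's formula.
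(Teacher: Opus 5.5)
Your proof is correct and follows the standard route. The paper gives no argument of its own and simply cites Alexander (1928), where the count of $n+2$ regions is the usual Euler-formula observation you reproduce: $n$ vertices, $2n$ edges, connected plane graph. Your strand count (cutting the circle at its $n$ under-passages) is likewise the standard one, and you are right that for $n=0$ the region count still holds while the strand count holds only by convention.
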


The colorings of the strands of the knot diagram are called Fox colorings, and the colorings of the regions are called Dehn colorings. 

\begin{figure}[h!]
    \centering
\begin{subfigure}{.4\textwidth}
\centering

\begin{tikzpicture}[pics/arrow/.style={code={%
  \draw[line width=0pt,{Computer Modern Rightarrow[line
  width=1pt,width=3ex,length=2ex]}-] (-0.5ex,0) -- (0.5ex,0);
  }}]
\begin{knot}[
clip width=5,
]
\strand[black,ultra thick] (-2,-2) to pic[pos=0.2,sloped]{arrow} (2,2);
\strand[black,ultra thick] (-2,1) -- (2,-1);;
\end{knot}

\tikzset{nnode/.style = {shape=circle,fill=myg,draw,inner sep=0pt,minimum
size=1.9em}}
\node[text=red] at (-2.3,1) {$a$};
\node[text=red] at (2.3,-1) {$c$};
\node[text=red] at (-2.3,-2) {$b$};

\end{tikzpicture}
    \caption{$c=ta +(1-t)b$}
    \label{fig:Fox_coloring}
\end{subfigure}
\hspace{0.1\textwidth}
\begin{subfigure}{.4\textwidth}
\centering
\begin{tikzpicture}[pics/arrow/.style={code={%
  \draw[line width=0pt,{Computer Modern Rightarrow[line
  width=1pt,width=3ex,length=2ex]}-] (-0.5ex,0) -- (0.5ex,0);
  }}]
\tikzset{nnode/.style = {shape=circle,fill=myg,draw,inner sep=0pt,minimum
size=1.9em}}

\begin{knot}[
clip width=4,
]
\strand[black,ultra thick] (-2,-2) to pic[pos=0.2,sloped]{arrow} (2,2);
\strand[black,ultra thick] (-2,1) -- (2,-1);;
\end{knot}

\node[text=red] at (0,-1) {$U_i$};
\node[text=red] at (0,1) {$U_l$};
\node[text=red, ultra thick] at (-1.2,-0.2) {$U_j$};
\node[text=red] at (1,0) {$U_k$};

\end{tikzpicture}
    \caption{$U_i-tU_j=U_k-tU_l$}
    \label{fig:Dehn_coloring}
\end{subfigure}

\caption{Fox coloring \ref{fig:Fox_coloring} and Dehn coloring \ref{fig:Dehn_coloring} of knot diagrams. }
\label{fig:Fox-Dehn}
\end{figure}

\begin{definition}
\label{def:FoxRt}
A \textbf{Fox $(R,t)$-coloring} of a knot diagram is a coloring of its strands such that for each crossing 
\begin{equation}
\label{eq:Fox}
c=ta +(1-t)b,
\end{equation}
where the strand with color $b$ is the overstrand and the strands colored with $a$ and $c$ are understands such that the rotation from the strand that is colored with $b$ to the strand that is colored with $c$ is counter clockwise; see Figure \ref{fig:Fox_coloring}.
A coloring is called trivial if all the colors are the same. The knot diagram is called \textbf{Fox $(R,t)$-colorable} if there exists a non-trivial coloring.
\end{definition}

The equations of the form \eqref{eq:Fox} can be stored in a matrix, whose kernel is called \textbf{the \textbf{module of Fox $(R,t)$-colorings}}.

\begin{definition}
\label{def:coloringMat}
Consider a knot diagram $D$ with crossings labeled as $\{c_1,\ldots,c_n\}$, and strands labeled as~$\{x_1,\ldots,x_n\}$. The \textbf{Fox coloring matrix} of $D$ is the matrix $M(t)$ is defined as follows.
$$
M_{ij}(t) = 
\begin{cases}
1 - t & \text{if} \ \ x_j \textnormal{ is an overstrand at } c_i,\\
-1 & \text{if} \ \ x_j \textnormal{ is an understrand at } c_i \textnormal{ at the left side of the overstrand},\\
t & \text{if} \ \ x_j \textnormal{ is an understrand at } c_i \textnormal{ at the right side of the overstrand},\\
0 & \text{otherwise}
\end{cases}
$$ for $1 \le i,j \le n$.
\end{definition}

We follow a similar procedure for the coloring of the regions a knot diagram.

\begin{definition}
\label{def:DehnRt}
A \textbf{Dehn $(R,t)$-coloring} of a knot diagram with $n$ crossing is a coloring of its regions such that for each crossing
\begin{equation}
\label{eq:Dehn}
U_i - tU_j = U_k - tU_l,
\end{equation}
where the regions $U_i,U_j,U_k$ and $U_l$ are regions that have the same crossing on their border in a way that $U_i$ and $U_k$ are on the left side of the overstrand and~$U_j$ and~$U_l$ are on the right side of the overstrand; see Figure~\ref{fig:Dehn_coloring}. We assume that the color of the outside region is the color $0$. The knot diagram is called \textbf{Dehn $(R,t)$-colorable} if there exists a non-trivial coloring.
\end{definition}

The equations of the form \eqref{eq:Dehn} can be stored in a matrix, whose kernel is called \textbf{the \textbf{module of Dehn $(R,t)$-colorings}}.

\begin{definition}
\label{def:DehnMat}
Consider a knot diagram $D$ with crossings labeled as $\{c_1,\ldots,c_n\}$, and regions labeled as~$\{U_1,\ldots,U_{n+2}\}$. The \textbf{Dehn coloring matrix}~$N(t)$ of $D$ is defined as follows.

$$N_{ms}(t) = 
\begin{cases}
1 & \text{if} \ \ s=i,\\
-t & \text{if} \ \ s=j,\\
-1 & \text{if} \ \ s=k,\\
t & \text{if} \ \  s=l,\\
0 & \text{otherwise},
\end{cases}
$$ for $1 \le m \le n$ and $1 \le s \le n+2$ with $U_i$, $U_j$, $U_k$ and $U_l$ are at crossing $c_m$ where the entries of $N(t)$ are determined according to the rule in Definition \ref{def:DehnRt}.

\end{definition}

Before moving on to the next coloring, we define the elementary ideals of a given a matrix over the ring $R$, and note that one of the most famous knot invariants, the Alexander polynomial, can be defined in terms of an elementary ideal of the Fox coloring matrix.

\begin{definition}\label{d-elem-id}
Let $K$ be a knot, $A \in R^{m \times n}$ and $k \in \Z_{\geq 0}$. The ideal generated by the determinants of all $(n-k)\times (n-k)$ submatrices of $A$ is called the $k$-th \textbf{elementary ideal} of $A$, and is denoted by $E_k(A)$. We say $E_k(A)=0$ if $n-k>m$, and $E_k(A)=R$ if $n-k\leq 0$. The \textbf{Alexander polynomial} of $K$, denoted by $\Delta_K(T)$, 
is the generator of $E_1(M(T))$ such that the constant term is positive, where $M(T)$ is the Fox coloring matrix (see Definition \ref{def:coloringMat}).
\end{definition}

Lastly, the coloring of the vertices of the Tait diagram is called Alexander-Briggs (AB) colorings.

\begin{definition}
\label{def:AB-coloring}
Given a Tait diagram of a knot, let
$$
\wt (U) = \sum_{v \in \partial U} \wt (U,v) v
$$
where $\partial U$ is the boundary of the region $U$ of the Tait diagram, and $v$ is a vertex lying on the boundary, and 
$$\wt (U,v) = 
\begin{cases}
t & \text{if there is a dot in } U \textnormal{ next to } v,\\
1 & \text{otherwise}. 
\end{cases}
$$
An \textbf{Alexander-Briggs $(R,t)$-coloring} of a Tait diagram is a coloring of its vertices such that $\wt(U) = 0$ for all regions $U$ of the Tait diagram. The Tait diagram is called \textbf{Alexander-Briggs $(R,t)$-colorable} if there exists a nonzero Alexander-Briggs $(R,t)$-coloring.  
\end{definition}

Similar to Fox and Dehn coloring matrices defined in this subsection, one can define the AB coloring matrix. 

\begin{definition}
\label{def:ABmatrix}
Consider a Tait diagram with vertices labeled as $\{v_1,\ldots,v_n\}$, and regions labeled as $\{U_1,\ldots,U_{n+2}\}$. The \textbf{Alexander-Briggs $(R,t)$-coloring matrix} of the Tait diagram is the matrix $P(t)$ defined as follows.

$$P_{rs}(t) = \wt(U_r,v_s)$$ for $1 \le r \le n+2$ and $1 \le s \le n$ where $v_s$ is in the boundary of $U_r$.
\end{definition}

In \cite{kilic2024knot}, it is noted that the three notions of colorability coincide: a Tait diagram is Fox $(R,t)$-colorable if and only if it is Dehn $(R,t)$-colorable if and only if it is Alexander-Briggs $(R,t)$-colorable.

\begin{proposition}({see \cite{kilic2024knot}})
\label{r-trivial-Dehn-col}
A knot diagram is Dehn $(R,t)$-colorable if and only it is Fox $(R,t)$-colorable if and only it is Alexander-Briggs $(R,t)$-colorable.
\end{proposition}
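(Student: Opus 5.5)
The plan is to prove the two equivalences separately. Fox and Dehn colorings are related by an explicit change of coordinates between region colors and strand colors. The Alexander--Briggs colorings are then compared with the Dehn ones through the transpose relation between $N(t)$ and $P(t)$ together with McCoy's theorem on solutions of linear systems over commutative rings.

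\emph{Dehn $\Rightarrow$ Fox.} Given a Dehn $(R,t)$-coloring $(U_s)$, I color a strand $x$ by $U_{\mathrm{left}} - t\,U_{\mathrm{right}}$, where $U_{\mathrm{left}}, U_{\mathrm{right}}$ are the regions on the two sides of $x$ at some point. The only places where these regions change while staying on the same strand are the crossings where $x$ is the overstrand. There, with $U_i,U_k$ on the left and $U_j,U_l$ on the right, the Dehn relation \eqref{eq:Dehn} says precisely that $U_i-tU_j=U_k-tU_l$. So the strand color is well defined. At each crossing I then write the two understrand colors $a,c$ and the overstrand color $b$ in terms of the four regions. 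I check by a short computation that \eqref{eq:Fox} holds, possibly after fixing the orientation conventions of Figure~\ref{fig:Fox-Dehn}.

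\emph{Fox $\Rightarrow$ Dehn.} Conversely, given a Fox coloring, I reconstruct region colors by ``integration'' from the outside region, which is colored $0$. Crossing a strand of color $x$ from its right side to its left side imposes $U_{\mathrm{left}} = x + tU_{\mathrm{right}}$. Path independence reduces, since the plane (or $S^2$) is simply connected, to consistency around a small loop at each crossing, and this is again \eqref{eq:Fox}. The Dehn relation then holds by construction.

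\emph{Nontriviality.} It remains to check that the two constructions match trivial colorings with trivial ones. Under the first map, a Dehn coloring goes to a constant Fox coloring $b$ exactly when $U_{\mathrm{left}}-tU_{\mathrm{right}}=b$ everywhere. Propagating from the outside region forces all regions to lie in a controlled submodule, and I will argue the coloring is then $0$ or handle the case where $1-t$ is a zero divisor separately. This bookkeeping, in the case where $1-t$ is not invertible in $R$, is the step I expect to be the main obstacle. If the direct argument gets messy, I will fall back on the determinantal argument below for this equivalence too.

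\emph{Dehn $\Leftrightarrow$ AB.} The vertices of the Tait diagram are the crossings, so $P(t)$ is a region-by-crossing matrix while $N(t)$ is crossing-by-region. The dots of the Tait diagram mark exactly the two regions that receive the factor $t$ in \eqref{eq:Dehn}. Hence $P(t)$ coincides with $N(t)^{T}$ up to multiplying rows and columns by $\pm1$, where the signs are determined by the checkerboard coloring; I will verify this using that the diagrams are alternating. Multiplying rows and columns by units does not change the kernels up to isomorphism, nor any determinantal ideal. By McCoy's theorem, a matrix $A$ with $n$ columns has a nonzero kernel vector over $R$ iff the annihilator of the ideal generated by its $n\times n$ minors is nonzero. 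Applying this to $N(t)$ with the outside-region column deleted, and to $P(t)$, reduces both colorability conditions to the vanishing of the annihilator of a determinantal ideal. The transpose relation then shows that these two ideals agree, because minors of $A$ and $A^T$ coincide and deleting one region column or row does not change the relevant ideal, by the dependency among the region rows.
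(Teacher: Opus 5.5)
Your two transfer constructions between Fox and Dehn colorings are sound: taking left/right relative to each arc's orientation, $x=U_L-tU_R$ satisfies the relation encoded in $M(t)$, however the understrand is oriented. The problem is that the proposal miscounts the trivial Dehn colorings, and this breaks both the nontriviality step and the determinantal step.

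Together your constructions show that $\Phi\colon U\mapsto (U_L-tU_R)$ is a bijection from the Dehn colorings with outside color $0$ onto \emph{all} Fox colorings, constants included. Hence the normalized preimage of a constant Fox coloring $b\neq 0$ is a nonzero Dehn coloring vanishing on the outside region. For $t=-1$ it is just the checkerboard coloring: $b$ on black regions, $0$ on white ones, with the outside region white. It satisfies \eqref{eq:Dehn} in every diagram, since each side of that equation involves one black and one white region.

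So the claim you plan to prove in the nontriviality step, that such a coloring must be $0$, is false whatever $1-t$ is. For instance, the standard trefoil diagram has no nonconstant Fox coloring over $\Z$ with $t=-1$, since $\Delta(-1)=3$ is not a zero divisor in $\Z$. Yet it carries this nonzero normalized Dehn coloring.

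The notion of triviality that makes the proposition true is: $U$ is trivial iff $\Phi(U)$ is constant. These colorings form the free rank-$2$ module spanned by $\mathbf 1$ and the normalized preimage $D_1$ of the constant Fox coloring $1$. When $1-t$ is a unit, this is the same as the span of $\mathbf 1$ and $t^{\mathrm{ind}}$, with $\mathrm{ind}$ the Alexander numbering. Accordingly, you must normalize \emph{two adjacent} regions $U_0,U_1$ to $0$, not just one. With this definition the Dehn--Fox equivalence is immediate from $\Phi$, and no zero-divisor bookkeeping is needed.

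The same miscount invalidates your McCoy step. $N(t)$ with only the outside column deleted is an $n\times(n+1)$ matrix whose ideal of $(n+1)$-minors is $0$. McCoy's theorem therefore says it \emph{always} has a nonzero kernel vector, so its criterion cannot match the AB criterion $\mathrm{Ann}(I_n(P(t)))\neq 0$, which is the correct one for $P(t)$.

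Instead, delete the columns of two adjacent regions $U_0,U_1$. The resulting $n\times n$ matrix $N'$ has a nonzero kernel iff $\det N'$ is a zero divisor; by Alexander's theorem $\det N'=\pm t^k\Delta(t)$. To show $I_n(P(t))=I_n(N(t))=(\det N')$, a single dependency among the region rows does not suffice. Use both relations $N(t)\mathbf 1=0$ and $N(t)D_1=0$. Here $D_1$ vanishes at $U_0$ and has a unit entry ($1$ or $-t^{-1}$) at $U_1$, so together they express the columns of $U_0$ and $U_1$ as $R$-combinations of the remaining $n$ columns. Multilinearity then makes every maximal minor of $N(t)$ a multiple of $\det N'$.
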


\subsection{Operations on Graphs and Codes}

In this subsection we recall some operations on graphs and codes that will be useful in the later parts of this paper.

\begin{figure}[ht!]
    \centering
\subcaptionbox{$D$}
[.3\linewidth]{\begin{tikzpicture}[use Hobby shortcut,
every trefoil component/.style={thick, draw}, pics/arrow/.style={code={%
  \draw[line width=0pt,{Computer Modern Rightarrow[line
  width=1pt,width=3ex,length=2ex]}-] (-0.5ex,0) -- (0.5ex,0);
  }}]
\path[spath/save=trefoil] ([closed]90:2) foreach \k in {1,...,3} {
.. (-30+\k*240:.5) .. (90+\k*240:2) } (90:2);
\tikzset{spath/knot={trefoil}{0pt}{1,3,5}}
\node[text=black] at (0,2) {$\boldsymbol{>}$};

\node[text=black] at (0.9,0.5) {\tiny{$\bullet$}};
\node[text=black] at (0.6,0.6) {\tiny{$\bullet$}};

\node[text=black] at (0.2,-0.8) {\tiny{$\bullet$}};
\node[text=black] at (0,-1) {\tiny{$\bullet$}};

\node[text=black] at (-0.9,0.5) {\tiny{$\bullet$}};
\node[text=black] at (-0.8,0.2) {\tiny{$\bullet$}};
\end{tikzpicture}
}
 \hspace{.03\textwidth}
\subcaptionbox{$\Gamma _D$}
[.3\linewidth]{
\begin{tikzpicture}
\tikzset{nnode/.style = {shape=circle,fill=myg,draw,inner sep=1.5pt, minimum
size=0.2em}}
\tikzset{edge/.style = {->,> = stealth}}

\node[nnode] (S1) {};
\node[shape=coordinate,right=0.5\mynodespace of S1] (K) {};
\node[nnode,right=\mynodespace of S1] (S2) {};
\node[nnode,above=0.8\mynodespace of K] (S3) {};

\draw[edge,bend left=0] (S3)  to node{} (S1);

\draw[edge,bend left=0] (S1)  to node{} (S2);

\draw[edge,bend right=0] (S2)  to node{} (S3);

\end{tikzpicture}}
 \hspace{.03\textwidth}
\subcaptionbox{$\Gamma _{D^*}$}
[.3\linewidth]{\begin{tikzpicture}
\tikzset{nnode/.style = {shape=circle,fill=myg,draw,inner sep=1.5pt, minimum
size=0.2em}}
\tikzset{edge/.style = {->,> = stealth}}

\node[nnode] (S1) {};
\node[nnode,right=\mynodespace of S1] (S2) {};

\draw[edge,bend left=0] (S1)  to node{} (S2);

\draw[edge,bend left=40] (S1)  to node{} (S2);

\draw[edge,bend right=40] (S1)  to node{} (S2);

\end{tikzpicture}}
\caption{A Tait diagram $D$ of the Trefoil knot, and its black and white graphs where the outside region is colored white.  \label{fig:signedgraph}}
\end{figure}

\begin{definition} {\cite[Definition 6.1]{kilic2024knot}}
\label{def:signedgraph}
Let $D$ be a Tait diagram of a knot, and $D^*$ be the same diagram with the interchanged checkerboard coloring. The \textbf{black graph} of $D$ is the planar graph $\Gamma _D$ whose vertices are the black regions of $D$. For each crossing there is an edge between two vertices if the black regions in the Tait diagram corresponding to these vertices have that crossing in their common boundaries. Similarly, $\Gamma _{D^*}$ is called the \textbf{white graph} of $D$. The graphs can be made directed by choosing the direction from the region without a dot to the region with a dot near the crossing in their common boundary. See Figure \ref{fig:signedgraph} for illustration.
\end{definition}

One can associate a positive sign to a crossing, and therefore to the corresponding edge in the graph, 
if the rotation from the outgoing overstrand to the outgoing understrand is counter clockwise, and negative otherwise. In this way, we get a \textbf{signed graph}. The signs do not change if we take the opposite orientation of the knot. Hence, from any Tait diagram~$D$ of an oriented knot, once the black graph is constructed, the associated directed graph, signed graph, and directed signed graph are obtained automatically and we denote all by~$\Gamma _D$. Similarly for the white graph $\Gamma _{D^*}$.

\begin{remark}
\label{r-altern-black-white}
By \cite[Proposition 3.2]{kauffman1989}, the collection of connected planar link diagrams is in one-to-one 
correspondence with the collection of connected signed planar graphs. Moreover, 
the signed graph has all signs of the same type, if and only if the link 
diagram is alternating.

\end{remark}

One can define graph codes from black and white directed graphs of the Tait diagram of knots, or simply from any directed graph, by using their incidence matrices.

\begin{definition}
\label{def:graphcode}
Consider a directed graph $\Gamma$ with vertices $\{v_1,\ldots,v_m\}$ and edges $\{e_1,\ldots ,e_n\}$. The matrix $A(t)$ of $\Gamma$ is defined as follows.
$$
A(t)_{ij} = 
\begin{cases}
1 & \text{if} \ \ e_j \textnormal{ is an outgoing edge of } v_i,\\
t & \text{if} \ \ e_j \textnormal{ is an ingoing edge of } v_i,\\
0 & \text{otherwise}.
\end{cases}
$$ for $1 \le i \le m$ and $1 \le j \le n$. The code with the parity check matrix $A(t)$ is denoted by~$\mC_{\Gamma,t}$.
\end{definition}

\begin{remark}\label{r-A(t)}
Observe that when $t=-1$ in Definition \ref{def:graphcode}, the matrix $A(-1)$ is equal to the incidence matrix of $\Gamma$. Moreover, deleting one of its rows still induces a parity check matrix for the code $\mC_{\Gamma,-1}$ by the property that if $t=-1$, then the rows are linearly dependent as the sum of them gives the zero vector. We will use $\mC_{\Gamma}$ instead of $\mC_{\Gamma,-1}$ for the ease of notation. The code $\mC_{\Gamma}$ is of particular interest, and is called the \textbf{cycle code} of $\Gamma$. 
\end{remark}

We continue with some well-known codes and graphs that are obtained via basic operations. We refer to \cite{pellikaan2018} for more details.

\begin{definition}\label{d-contract-delete}
Consider an edge $e$ of the graph $\Gamma$.
The \textbf{deletion graph} $\Gamma \setminus e$ of $\Gamma$ at $e$ is the graph obtained from $\Gamma$ by deleting $e$. The \textbf{contraction graph} $\Gamma / e$ of $\Gamma $ at $e$ is the graph obtained from $\Gamma$ by deleting the edge $e$ and contracting the endpoints of $e$ to a single vertex.
\end{definition}

Philosophically similar to operations mentioned in Definition \ref{d-contract-delete} for graphs, there are the puncturing and shortening operations for codes.

\begin{definition}\label{d-punct-short}
Consider a coordinate $i$ of the code $\mC $. The \textbf{punctured code} $\mC_i$ of $\mC$ at the $i$-th coordinate is the code obtained from $\mC$ by deleting the $i$-th coordinate of the codewords. The \textbf{shortened code} $\mC^i$ of $\mC$ at the $i$-th coordinate is the code obtained from~$\mC$ by puncturing the $i$-th coordinate of all codewords that have a zero at the $i$-th coordinate.

\end{definition}

The following result would play a role in proving the main result of this paper.

\begin{proposition}\label{p-code-punct-short}
Let $\mC$ and $\mD$ be $\F_q$-linear codes of the same length. 
Suppose that the shortened codes of $\mC$ and $\mD$  at the $i$-th position are the same, and also the punctured codes of $\mC$ and $\mD$  at the $i$-th position are the same.
Then $\mC=\mD$.
\end{proposition}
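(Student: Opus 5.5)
The natural plan is to compare $\mC$ and $\mD$ coordinate by coordinate. Write every word as $(x,a)$, where $a$ is the $i$-th entry and $x$ collects the other entries. The shortened code $\mC^i$ is the set of $x$ with $(x,0)\in\mC$, and the punctured code $\mC_i$ is the set of $x$ with $(x,a)\in\mC$ for some $a$; likewise for $\mD$. The puncturing map $\pi\colon \mC\to\mC_i$ is a surjective linear map whose kernel is spanned by the unit vector $e_i$ if $e_i\in\mC$, and is $0$ otherwise.

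The steps would be as follows. First, note that $\mC^i=\mD^i$ gives $\mC\cap\{a=0\}=\mD\cap\{a=0\}$. Second, note that $\mC_i=\mD_i$ gives $\pi(\mC)=\pi(\mD)$. Third, take $(x,a)\in\mC$ and lift $x\in\mD_i$ to some $(x,b)\in\mD$. One would then like to show $a=b$, or at least $(x,a)\in\mD$. The only available tool is the difference $(0,a-b)$, which lies in neither code a priori, so hypotheses on $\mC$ and $\mD$ separately give no control over it.

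This third step is the main obstacle, and I expect it cannot be overcome from the hypotheses as stated, because the statement is false as worded. Over $\F_3$ with $i=2$, take $\mC=\langle(1,1)\rangle$ and $\mD=\langle(2,1)\rangle$. Both shortened codes are $\{0\}$, since neither code contains a nonzero word with a zero in position $2$. Both punctured codes are all of $\F_3$. Yet $\mC\neq\mD$. The same phenomenon occurs whenever position $i$ is not identically zero: rescaling the $i$-th coordinate, or more generally replacing $a$ by $a+\lambda(x)$ for a linear functional $\lambda$ vanishing on $\mC^i$, preserves both derived codes.

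So my proposal is to record that the proposition cannot be proved in this generality. The argument above does close under the extra hypothesis $\mC\subseteq\mD$, which is presumably what is available where the result is applied, for instance when one code is shown to contain the other via a common generating set. In that case, every $(x,b)\in\mD$ with $x\in\mC_i$ differs from some $(x,a)\in\mC\subseteq\mD$ by $(0,b-a)\in\mD$. If $b\neq a$ this forces $e_i\in\mD$, hence $\mD^i\ni 0$ with an extra dimension, contradicting the dimension count $\dim\mD=\dim\mD_i+\dim(\ker\pi)$ unless $e_i\in\mC$ as well. Comparing dimensions then gives $\mC=\mD$.

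Without such an additional hypothesis, I would not attempt a proof. Instead I would state the counterexample and ask for the hypothesis to be added to the statement.
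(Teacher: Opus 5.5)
Your counterexample is correct: the proposition is false as stated, and the paper's own proof breaks exactly at the step you isolate. Working at position $1$, the paper takes $c\in\mC_1\setminus\mC^1$, notes that $c\in\mD_1\setminus\mD^1$, and concludes that $\mC$ and $\mD$ are both generated by the same word $(1,c)$ together with their common shortened code. These memberships only give $(\alpha,c)\in\mC$ and $(\beta,c)\in\mD$ with $\alpha,\beta\neq 0$, and nothing forces $\alpha=\beta$. Your example with the coordinates swapped, $\mC=\langle(1,1)\rangle$ and $\mD=\langle(1,2)\rangle$ over $\F_3$, has $\alpha=1$ and $\beta=2$.

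The paper's degenerate case has a second, independent error. $\mD^1=\mD_1$ does not force $\mD$ to vanish at position $1$, since it also holds whenever $e_1\in\mD$. For the same reason, non-degeneracy of $\mC$ does not force $\mC^1\neq\mC_1$. Thus $\mC=\{0\}$, $\mD=\langle e_1\rangle$ is a further counterexample, valid over every $\F_q$, including $\F_2$.

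This second counterexample also refutes your proposed repair. We have $\{0\}\subseteq\langle e_i\rangle$, and the two codes have equal shortened codes and equal punctured codes at $i$. Your dimension count yields no contradiction: a word $e_i\in\mD\setminus\mC$ affects neither $\mD_i$ nor $\mD^i$, since it has a nonzero $i$-th entry and punctures to $0$. It only makes $\dim\mD=\dim\mC+1$.

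The precise situation is as follows. Let $S$ and $P$ be the common shortened and punctured codes, and write coordinate $i$ last.
\begin{itemize}
\item If $S\neq P$, then $e_i$ lies in neither code. Each code is the graph $\{(x,\phi(x)) : x\in P\}$ of a nonzero functional with kernel $S$, so $\mD$ is $\mC$ with its $i$-th coordinate multiplied by a nonzero scalar. Your shift $a\mapsto a+\lambda(x)$ is no more general than this, and it must avoid $\lambda=-\phi$.
\item If $S=P$, each code is $P\times\{0\}$ or $P\times\F_q$.
\end{itemize}

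A correct statement is therefore: if $\mC^i=\mD^i$, and either both codes vanish identically at $i$ or they share a codeword $w$ with $w_i\neq 0$, then $\mC=\mD$. Indeed, every codeword of either code is then $(s,0)$, respectively $(s,0)+\gamma w$, with $s\in S$ and $\gamma\in\F_q$. The punctured codes are not even needed. The shared $w$ is exactly what the paper's common word $(1,c)$ silently assumes. Alternatively, $\mC\subseteq\mD$ together with $e_i\notin\mD$ suffices, because then $\dim\mC=\dim\mC_i=\dim\mD_i=\dim\mD$.

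Containment is also not what the application in Theorem~\ref{thm-code-alternating} supplies. There the edge $e$ is neither a loop nor a bridge, since the diagram is reduced, so one is in the case $S\neq P$. After the sign changes $(x,y)$, the two cycle codes therefore agree up to a nonzero scalar at the coordinate of $e$. A separate argument is still needed to show this scalar is $\pm1$, for instance comparing the signed vectors of a cycle through $e$ in $\Gamma$ and in $\Gamma'$. That is all $(\pm1)$-permutation equivalence requires.
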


\begin{proof}
Without loss of generality, we may assume that $i=1$.
Let $\mC^1$ and $\mD^1$ be the shortened codes of $\mC$ and $\mD$  at the first position.
Let $\mC_1$ and $\mD_1$ be the punctured codes of $\mC$ and $\mD$  at the first position.
Then $\mC^1=\mD^1$  and  $\mC_1=\mD_1$ by assumption.

If $\mC$ is degenerate at the first position, that is, all codewords of $\mC$ are zero at the first position, then $\mC^1=\mC_1$.
So also $\mD^1=\mD_1$ and all codewords of $\mD$ are zero at the first position.
Hence $\mC= \{ (0,x) | x \in \mC^1 \}$ and $\mD= \{ (0,y) | y \in \mD^1 \}$.
Therefore $\mC=\mD$, since~$\mC^1=\mD^1$.

If $\mC$ is non-degenerate at the first position, then there is a codeword of $\mC$ that is not zero at the first position and $\mC^1\not=\mC_1$.
Let $ c\in \mC_1\setminus \mC^1 $. Then also $\mD^1\not=\mD_1$ and $ c\in \mD_1\setminus \mD^1 $, since $\mC^1=\mD^1$  and  $\mC_1=\mD_1$.
So $\mC$ is generated by $(1,c)$ and $\{ (0,x) | x \in \mC^1 \}$. Similarly, $\mD$ is generated by $(1,c)$ and $\{ (0,y) | y \in \mD^1 \}$.
Therefore $\mC=\mD$, since $\mC^1=\mD^1$.
\end{proof}

Puncturing and shortening behave in a complementary way with respect to duality as the following relation shows.

\begin{proposition} ({see \cite[Proposition 2.1.27]{pellikaan2018}})
\label{p-contr-}
Let $i$ be a coordinate of the code $\mC $. We have $(\mC_i)^\perp=(\mC^\perp)^i$ and $(\mC^i)^\perp=(\mC^\perp)_i$.
\end{proposition}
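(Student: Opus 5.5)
The plan is to prove the first identity $(\mC_i)^\perp=(\mC^\perp)^i$ directly, and then obtain the second by applying the first to $\mC^\perp$ and using $(\mC^\perp)^\perp=\mC$ for linear codes over $\F_q$. Without loss of generality take $i=1$ and write every word of length $n$ as $(a,x)$ with $a\in\F_q$ and $x\in\F_q^{n-1}$. By definition,
\[
\mC_1=\{x : (a,x)\in\mC \text{ for some } a\},\qquad (\mC^\perp)^1=\{y : (0,y)\in\mC^\perp\}.
\]

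First I show $(\mC^\perp)^1\subseteq(\mC_1)^\perp$. Take $y$ with $(0,y)\in\mC^\perp$ and any $x\in\mC_1$, say $(a,x)\in\mC$. Then
\[
0=(a,x)\cdot(0,y)=x\cdot y,
\]
so $y\perp\mC_1$.

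For the reverse inclusion, take $y\in(\mC_1)^\perp$. For any $(a,x)\in\mC$ we have $x\in\mC_1$, hence $(a,x)\cdot(0,y)=x\cdot y=0$. Therefore $(0,y)\in\mC^\perp$, that is, $y\in(\mC^\perp)^1$. This gives $(\mC_1)^\perp=(\mC^\perp)^1$.

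For the second identity, apply the first to the code $\mC^\perp$ to get $((\mC^\perp)_1)^\perp=((\mC^\perp)^\perp)^1=\mC^1$. Taking duals in $\F_q^{n-1}$ and using that double duals of linear codes are the codes themselves yields $(\mC^\perp)_1=(\mC^1)^\perp$.

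No step is a real obstacle. The only points needing care are that shortening must keep exactly the codewords with zero first coordinate (this is what makes the inner product collapse to $x\cdot y$), and that the double-dual identity requires linearity over a field, which is assumed.
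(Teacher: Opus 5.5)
The paper does not prove this statement; it quotes \cite[Proposition 2.1.27]{pellikaan2018}. Your argument is a correct and complete proof. The double inclusion gives $(\mC_i)^\perp=(\mC^\perp)^i$ with no dimension count. The second identity then follows by applying the first to $\mC^\perp$ and using $(\mC^\perp)^\perp=\mC$.

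Your caveat about working over a field is well placed. The first identity holds verbatim over any commutative ring, but the second genuinely needs double duality. Over $\Z$, for $\mC$ generated by $(2,1)$, one gets $(\mC^1)^\perp=\Z$ but $(\mC^\perp)_1=2\Z$. This is consistent with the finite-field setting of the cited source.
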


The graph version of Proposition \ref{p-contr-} is the following.

\begin{proposition}
\label{p-contr-del-punct-short}
Let $e$ be an edge of the graph $\Gamma$, and $i$ be the coordinate of $\mC_{\Gamma}$ that corresponds to the edge $e$. We have $\mC_{\Gamma \setminus e}= (\mC_{\Gamma})^i$ and $\mC_{\Gamma / e}= (\mC_{\Gamma})_i$.
\end{proposition}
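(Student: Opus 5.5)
To prove Proposition \ref{p-contr-del-punct-short}, I would work directly with parity check matrices and use Remark \ref{r-A(t)}. With $t=-1$, the code $\mC_{\Gamma}$ is the kernel of the signed incidence matrix $A=A(-1)$. A vector $x$ indexed by the edges lies in $\mC_{\Gamma}$ exactly when, at every vertex $v$, the sum of $x$ over the outgoing edges of $v$ equals the sum over the ingoing edges. In other words, $\mC_{\Gamma}$ is the space of flows that are conserved at every vertex. I will treat the two identities separately.

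\emph{Deletion.} Let $e$ be the $i$-th edge, joining $u$ and $w$. The incidence matrix of $\Gamma\setminus e$ is $A$ with its $i$-th column removed. Consider a codeword $x\in\mC_{\Gamma}$ with $x_i=0$. Then $Ax=0$ involves column $i$ only with coefficient $0$, so the punctured vector $x'$ satisfies $A'x'=0$, where $A'=A$ without column $i$. Conversely, given $x'$ with $A'x'=0$, inserting a $0$ in position $i$ gives a codeword of $\mC_{\Gamma}$. Hence $\mC_{\Gamma\setminus e}=(\mC_{\Gamma})^i$. This step is immediate.

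\emph{Contraction.} The incidence matrix of $\Gamma/e$ is obtained from $A$ in two steps. First, replace rows $u$ and $w$ by their sum. Second, delete column $i$, whose entries in rows $u$ and $w$ are $1$ and $-1$ and therefore cancel in the sum.
\begin{itemize}
\item For the inclusion $(\mC_{\Gamma})_i\subseteq\mC_{\Gamma/e}$: if $Ax=0$, then the merged row is also orthogonal to $x$. Since the merged row has a zero in position $i$, the punctured vector satisfies the conservation equations of $\Gamma/e$.
\item For the reverse inclusion: take $y\in\mC_{\Gamma/e}$. Choose $x_i$ so that the equation at $u$ holds; this is possible because the entry of row $u$ in column $i$ is $\pm1$. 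The equation at $w$ then follows, because the sum of the equations at $u$ and $w$ is exactly the merged equation, which holds by assumption. The equations at all other vertices do not involve $x_i$ and hold already. Thus $y$ extends to a codeword of $\mC_{\Gamma}$, which gives $\mC_{\Gamma/e}\subseteq(\mC_{\Gamma})_i$.
\end{itemize}

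The main point requiring care is the degenerate case where $e$ is a loop, that is $u=w$. Then column $i$ of $A$ is zero, so $\Gamma/e=\Gamma\setminus e$ and $x_i$ is unconstrained. In this case puncturing and shortening give the same code, namely the cycle code of $\Gamma\setminus e$, so both identities still hold. I would also note the convention that contraction keeps parallel edges, which become loops, so that the column correspondence above holds exactly. Alternatively, the contraction identity follows from the deletion identity by duality, via Proposition \ref{p-contr-} and the fact that the dual of the cycle code is the cut space; I would mention this route but use the direct argument.
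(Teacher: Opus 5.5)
Your proof is correct. The paper gives no proof of this proposition: it only presents it as ``the graph version'' of Proposition \ref{p-contr-} and treats it as the standard correspondence between deletion/contraction of a graph and shortening/puncturing of its cycle code (cf.\ \cite{pellikaan2018}). Your direct verification on the parity check matrix $A(-1)$ therefore fills a gap rather than reconstructing an argument, and it buys several things.

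First, it makes explicit the conventions the statement silently needs. Definition \ref{def:graphcode} does not say what the column of a loop is, and your choice $1+t=0$ at $t=-1$ is exactly the one required. You also use that edges parallel to $e$ survive in $\Gamma/e$ as loops, which is implicit in Definition \ref{d-contract-delete}. This matters because Tait graphs routinely have parallel edges, for example the white graph of the trefoil in Figure \ref{fig:signedgraph}.

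Second, your argument shows that the deletion identity holds for every $t$. The contraction identity, by contrast, uses $t=-1$ essentially, through the cancellation of the two entries of column $i$ in the merged row. This is one of the places where the proof of Theorem \ref{thm-code-alternating} genuinely depends on $t=-1$.

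Third, your argument works over any commutative ring, since you only ever invert $\pm1$.

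The duality route you mention is also valid over a field, and is no shorter. A combination $\lambda^{T}A(-1)$ vanishes in position $i$ exactly when $\lambda_u=\lambda_w$. This identifies the shortened cut space at $i$ with the row space of the incidence matrix of $\Gamma/e$. Proposition \ref{p-contr-} then gives $(\mC_\Gamma)_i=\mC_{\Gamma/e}$.
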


We conclude this subsection by recalling the codes arising from the black and white graphs of a Tait diagram, together with a crucial result establishing their duality under certain assumptions.

\begin{definition}
\label{def:blackcode}
The codes $\mC_{\Gamma_D,t}$ and $\mC_{\Gamma_{D^*},t}$ of the black and white graphs defined in Definition \ref{def:signedgraph} are called the \textbf{black code} and the \textbf{white code}, denoted by $\mC_{D,t}$ and $\mC_{D^*,t}$, respectively. When $t=-1$, we simply denote them as $\mC_{D}$ and $\mC_{D^*}$.

\end{definition}

\begin{theorem}(see \cite[Theorem 6.6]{kilic2024knot})
 \label{p-dual-black-white}
 Let $D$ be a reduced Tait diagram of a knot.
 If the characteristic is $2$ or the diagram is alternating, 
 then the black and white codes are dual to each other, that is, $\mC_D^\perp =\mC_{D^*}$.
 \end{theorem}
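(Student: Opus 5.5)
The statement to prove is the last one displayed, Theorem \ref{p-dual-black-white}: for a reduced Tait diagram $D$ of a knot, if the characteristic is $2$ or $D$ is alternating, then $\mC_D^\perp=\mC_{D^*}$. I will prove it by induction on the number of crossings $n$, using deletion/contraction of an edge together with Propositions \ref{p-code-punct-short}, \ref{p-contr-} and \ref{p-contr-del-punct-short}.

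\textbf{Setup.} The black graph $\Gamma_D$ and the white graph $\Gamma_{D^*}$ are planar duals. Their edges are in natural bijection with the crossings, and I index the coordinates of both codes by the crossings. Both codes are cycle codes, i.e.\ kernels of signed incidence matrices $A(-1)$.

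\textbf{Direct approach first.} The cleanest route may be to check directly that every cut vector of $\Gamma_{D^*}$ (a row of its incidence matrix) is orthogonal to every cycle vector of $\Gamma_D$. In the plane, a cycle of $\Gamma_D$ and the star of a vertex of $\Gamma_{D^*}$ share an even number of edges, and they do so in pairs. The signs are the issue.
\begin{itemize}
\item In characteristic $2$ all signs vanish, orthogonality is immediate, and dimensions match: $\dim\mC_D=n-|V(\Gamma_D)|+1$ and $\dim\mC_{D^*}=n-|V(\Gamma_{D^*})|+1$. Their sum is $n$ by Euler's formula, since $|V(\Gamma_D)|+|V(\Gamma_{D^*})|=n+2$ by Lemma \ref{lem:knot_diag}.
\item For alternating $D$, I must show that the orientations induced by the dots make each shared pair contribute opposite signs. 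This should follow because alternating means all crossings have the same type (Remark \ref{r-altern-black-white}). The dot convention then orients dual edges coherently, say by a fixed rotation of each black edge.
\end{itemize}
With orthogonality and the dimension count, equality follows.

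\textbf{Induction as fallback.} If the sign bookkeeping is messy, I will instead induct. Pick a crossing $e$. Deleting $e$ from $\Gamma_D$ corresponds to contracting $e$ in $\Gamma_{D^*}$, which amounts to a smoothing of the diagram. Proposition \ref{p-contr-del-punct-short} gives
\[
(\mC_D)^e=\mC_{\Gamma_D\setminus e}, \qquad (\mC_{D^*})_e=\mC_{\Gamma_{D^*}/e}.
\]
By induction $\mC_{\Gamma_D\setminus e}^\perp=\mC_{\Gamma_{D^*}/e}$. Proposition \ref{p-contr-} then yields $((\mC_D)^\perp)_e=(\mC_{D^*})_e$, and symmetrically $((\mC_D)^\perp)^e=(\mC_{D^*})^e$. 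Proposition \ref{p-code-punct-short} concludes $(\mC_D)^\perp=\mC_{D^*}$.

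\textbf{Main obstacle.} The hardest step is that smoothing may destroy reducedness, the knot property, or alternation in the inductive hypothesis. So I will strengthen the statement to arbitrary plane graphs with dual orientations satisfying the consistent-rotation condition. This condition is exactly what alternation provides, and it is preserved under deletion and contraction.
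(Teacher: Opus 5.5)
Your overall strategy (one inclusion plus the Euler dimension count, which you do correctly) is the right one. However, the orthogonality you set out to check is the wrong one, and it is false. The rows of $A_{D^*}(-1)$ span $\mC_{D^*}^\perp$ and the cycle vectors of $\Gamma_D$ span $\mC_D$. So ``every cut vector of $\Gamma_{D^*}$ is orthogonal to every cycle vector of $\Gamma_D$'' is equivalent to $\mC_D\subseteq\mC_{D^*}$, not to $\mC_{D^*}\subseteq\mC_D^\perp$. Your parity claim already fails in characteristic $2$ for the trefoil of Figure~\ref{fig:signedgraph}: the triangle of $\Gamma_D$ meets the star of either white vertex in all three edges.

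What you need is $A_{D^*}(-1)\,A_D(-1)^{T}=0$, i.e.\ the star of a black region $u$ against the star of a white region $w$. Equivalently, the stars of $\Gamma_{D^*}$ \emph{are} cycle vectors of $\Gamma_D$; they are not orthogonal to them. For this pairing the argument works:
\begin{itemize}
\item Each crossing common to $u$ and $w$ is an endpoint of exactly one arc between consecutive crossings that separates $u$ from $w$. That arc has two distinct endpoints, since reducedness excludes loops in both graphs.
\item Both dots at a crossing $c$ lie to the left of the overstrand. Hence $A_D(-1)_{u,c}\,A_{D^*}(-1)_{w,c}$ equals $-1$ if that arc passes over at $c$, and $+1$ if it passes under.
\item So each arc contributes $\pm1\pm1$, with the minus sign exactly at its over-end. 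This vanishes in characteristic $2$. For alternating $D$ it vanishes because every arc is over at exactly one end.
\end{itemize}
That is the concrete content of your ``should follow''. Your fixed-rotation claim is true: at every crossing the white edge is the black edge turned by the same quarter turn. But it only gives the result via the correct statement that the signed boundary of a face of $\Gamma_D$ is $\pm$ the star of the corresponding vertex of $\Gamma_{D^*}$. Since $\Gamma_D$ is connected, $\operatorname{rank}A_D(-1)=|V(\Gamma_D)|-1$, and your count then gives $\mC_D^\perp=\mC_{D^*}$.

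The inductive fallback does not close either, for a reason unrelated to the hypotheses you worry about. Its final step is Proposition~\ref{p-code-punct-short}, which is false as stated:
\begin{itemize}
\item For every $q$, the codes $\F_q^2$ and $\{0\}\times\F_q$ have the same shortened code and the same punctured code at the first coordinate. The proof wrongly assumes that non-degeneracy forces $\mC^1\neq\mC_1$.
\item For $q>2$, so do $\langle(1,1)\rangle$ and $\langle(\lambda,1)\rangle$ with $\lambda\notin\{0,1\}$. The proof also assumes that the words of $\mC$ and $\mD$ lying over $c\in\mC_1\setminus\mC^1$ have the same first entry.
\end{itemize}
Hence equal shortened and punctured codes of $\mC_D^\perp$ and $\mC_{D^*}$ at $e$ do not give $\mC_D^\perp=\mC_{D^*}$. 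You would need an inclusion as extra input, and once you have it the dimension count finishes the proof with no induction. So drop the induction and the strengthened statement, and repair the direct argument as above.
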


The path to go from knot colorings to codes was to use their coloring matrices. More precisely, in \cite{kilic2024knot}, the properties of so-called knot codes were studied and these codes were either Fox codes, Dehn codes or AB codes. Each of these codes was constructed by taking the corresponding Fox, Dehn or AB coloring matrix as the parity-check matrix of the resulting codes.

We conclude this subsection by recalling the following nice relation between AB codes and the intersection of the black and white codes of a Tait diagram $D$. Before stating the result we note that the \textbf{hull} of a code is its intersection with its dual.

\begin{proposition}(see \cite[Proposition 6.7]{kilic2024knot})
\label{p-AB-hull}
 The code $\mC_{D,t} \cap \mC_{D^*,t}$ is equal to the Alexander-Briggs code $\mA_{D,t}$. 
 If $t=-1$, then $\mA_{D}$ is equal to the hull of $\mC_{D}$.
 \end{proposition}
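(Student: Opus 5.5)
The idea is to show that the Alexander--Briggs matrix $P(t)$ of Definition \ref{def:ABmatrix} is, after reordering its rows, the matrix $A(t)$ of the black graph stacked on top of the matrix $A(t)$ of the white graph. Once that is established, the first claim follows formally: a code defined by a stacked parity-check matrix is the intersection of the two kernels.

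\textbf{Step 1: matching the coordinates.} By Lemma \ref{lem:knot_diag} the Tait diagram $D$ has $n$ crossings and $n+2$ regions. The vertices of $D$ are its crossings. By Definition \ref{def:signedgraph}, every crossing gives exactly one edge of $\Gamma_D$ and one edge of $\Gamma_{D^*}$. So all three codes $\mA_{D,t}$, $\mC_{D,t}$ and $\mC_{D^*,t}$ have length $n$, with coordinates indexed by the crossings $v_1,\ldots,v_n$. The rows of $P(t)$ are indexed by the regions. The black regions are exactly the vertices of $\Gamma_D$ and the white regions are exactly the vertices of $\Gamma_{D^*}$. Hence the rows of $P(t)$ split into a black block and a white block.

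\textbf{Step 2: local check at a crossing.} Fix a crossing $v$. Four region-corners meet at $v$: two black corners, lying opposite each other, and two white corners. By definition, the two dots sit on the left side of the overstrand, one just before and one just after the understrand. These two positions lie in the two corners on the left of the overstrand, and the understrand separates them. Adjacent corners have different colours, so exactly one black corner and exactly one white corner carry a dot.

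In $\Gamma_D$, the edge $v$ joins the two black regions at $v$. It is directed from the undotted region to the dotted one. By Definition \ref{def:graphcode}, the column of $A(t)$ for this edge therefore has entry $1$ at the undotted region (outgoing end) and $t$ at the dotted region (ingoing end). This is precisely $\wt(U,v)$ for the two black regions $U$. All other black regions do not meet $v$, so they get $0$ in both matrices. The same argument applies to $\Gamma_{D^*}$ and the white regions.

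One technicality needs care. A region could meet $v$ in two opposite corners, which would make the edge of the graph a loop. In a reduced diagram this cannot happen, since such a crossing would be nugatory. So the entries agree exactly and no two contributions have to be added.

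\textbf{Step 3: conclusion for general $t$.} It follows that $P(t)=\begin{pmatrix} A_{\Gamma_D}(t)\\ A_{\Gamma_{D^*}}(t)\end{pmatrix}$ up to a permutation of rows. Hence
\[
\mA_{D,t}=\ker P(t)=\ker A_{\Gamma_D}(t)\cap \ker A_{\Gamma_{D^*}}(t)=\mC_{D,t}\cap\mC_{D^*,t}.
\]

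\textbf{Step 4: the case $t=-1$.} Setting $t=-1$ in Step 3 gives $\mA_D=\mC_D\cap\mC_{D^*}$. The diagrams here are assumed reduced and alternating, so Theorem \ref{p-dual-black-white} applies and gives $\mC_{D^*}=\mC_D^\perp$. Therefore $\mA_D=\mC_D\cap\mC_D^\perp$, which is the hull of $\mC_D$.

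The main obstacle is Step 2. It is the only non-formal step: one has to verify that the dot convention places exactly one dot in a black corner and one in a white corner at each crossing. This is also where the reducedness hypothesis is genuinely used, to rule out a region meeting a crossing twice.
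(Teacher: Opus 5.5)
Your proposal is correct and is essentially the argument behind the cited result; the paper itself gives no proof and only refers to \cite[Proposition 6.7]{kilic2024knot}. Up to reordering rows, $P(t)$ is the black-graph matrix $A(t)$ stacked on the white-graph matrix, so $\mA_{D,t}=\mC_{D,t}\cap\mC_{D^*,t}$, and at $t=-1$ Theorem \ref{p-dual-black-white} (using the standing reduced and alternating assumptions) turns this intersection into the hull of $\mC_D$, with your dot-convention check and your use of reducedness to exclude loops being exactly what makes the row identification exact.
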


\section{Tait's Flyping Conjecture}
\label{sec:Tait}

In his foundational papers \cite{tait1898}, Tait discovered certain type of operation on alternating link diagrams, called \textit{a flype}. We refer to \cite{menasco1993} for the history and the usage of this term within knot theory.

For the rest of this paper, we will assume that a knot is an embedding of the unit circle $S^1$ in $S^3$ (not in $\R^3$ as in Definition \ref{def:knot} we used earlier). We will explain the reason for this change in Remark \ref{rem:whyS3}. Moreover, for the rest of this section, a \textbf{circle} and \textbf{disk} in $\R^2$ or the $2$-sphere, and a \textbf{sphere} and \textbf{ball} in $\R^3$ or the $3$-sphere, will mean the image of a tame embedding of the standard circle 
$\{(x_1,x_2) \mid x_1, x_2 \in \R, \, x_1^2+x_2^2=1\}$, disk $\{(x_1,x_2) \mid x_1, x_2 \in \R, \, x_1^2+x_2^2\leq 1\}$, 
sphere $\{(x_1,x_2,x_3) \mid x_1, x_2, x_3 \in \R, \, x_1^2+x_2^2+x_3^2=1\}$, 
and ball $\{(x_1,x_2,x_3) \mid x_1, x_2, x_3 \in \R, \, x_1^2+x_2^2+x_3^2 \leq 1\}$, respectively.


\begin{definition}
Let $S^2 = \mathbb{R}^2 \cup \{\infty\}$ and $S^3 = \mathbb{R}^3 \cup \{\infty\}$. Let $C\subset S^2$ be a circle such that~$\infty \notin C$. Then, $S^2 \setminus C$ has two connected components. The component
that contains~$\infty$ is called the \textbf{outside} of $C$, and the other component is
called the \textbf{inside} of $C$. Similarly, let $S\subset S^3$ be a sphere such that $\infty\notin S$. Then $S^3\setminus S$ has two connected components. The
component that contains $\infty$ is called the \textbf{outside} of $S$, and the other
component is called the \textbf{inside} of $S$.  
\end{definition}

We continue by defining a portion of a knot diagram, first introduced by Conway in \cite{conway1970enumeration} as a classical knot-diagram notion, which is a special case of a more general topological notion.

\begin{definition}
\label{def:ntangle}
An \textbf{$n$-tangle} is a proper embedding of the disjoint union of n arcs into a ball such that the embedding must send the endpoints of the arcs to $2n$ specified marked points on the ball's boundary. A diagram of an $n$-tangle is a regular projection of that $n$-tangle on a disk such that the $2n$ marked endpoints on the boundary of the ball are projected to $2n$ marked points on the boundary of the disk. If the arcs are oriented, we have $n$ incoming and 
$n$ outgoing marked boundary points. In particular, a $2$-tangle is simply called a \textbf{tangle}, following Conway's notation.
\end{definition}

Connecting the two endpoints of a diagram of a $1$-tangle by an embedded arc on the outside of the disk of the diagram of the $1$-tangle gives a 
diagram of a knot, and we call the diagram the \textbf{closure} of the tangle. Conversely, if there exists a circle $C$ on the plane that intersects the diagram $D$ of a knot in exactly two points, then it divides $D$ into two $1$-tangles, one within the circle $C$ and the other with a $1$-tangle on the outside.

\begin{definition} 
Let $D_i$ be a diagram of an oriented knot $K_i$, and $T_i$ be a $1$-tangle in a disk with boundary circle $C_i$ such that $D_i$ is the closure of $T_i$ and its closing arc is a subset of the arc $x_i$ between two consecutive crossings of $D_i$, for $i=1,2$. Assume furthermore that~$C_1$ and $C_2$ coincide,
and that the inside disk of $C_1$ is equal to the outside disk of $C_2$, and the inside disk of $C_2$ is equal to the outside disk of $C_1$;
and the marked points coincide such that the outgoing endpoint of $C_1$ is the incoming endpoint of $C_2$, and the outgoing endpoint of~$C_2$ is the incoming endpoint of $C_1$. Then the union of $T_1$ and $T_2$ gives a diagram of a knot, that is the oriented \textbf{connected sum} $K_1 \# K_2$ of the oriented knots $K_1$ and $K_2$,
and will be denoted by $(D_1,x_1) \# (D_2,x_2)$.\\
A knot that cannot be written as the sum of two non-trivial knots is called a \textbf{prime} knot, otherwise it is called a \textbf{composite} knot.
\end{definition}

The connected sum of knots does not depend on the choice of the arcs \cite{rolfsen2003knots},
and Tait's flyping conjecture shows that the equivalent diagrams could be obtained by a series of flypes in the case of prime alternating knots.

\textbf{Flyping} is an operation on a diagram $D$ of an alternating knot that consists of $3$ tangles denoted by $T$, $E$ and $R$. An alternating knot has the property that for the black regions the direction from the overstrand to the understrand is counterclockwise for all crossings, see e.g. \cite{lickorish1997}.

Tangle $T$ has black regions $a$ and $b$ on the left-hand and right-hand side, respectively. The tangle $E$ consists of a simple twist with black regions $b$ and $c$ on the left-hand and right-hand side, respectively.
Tangle $R$ has black regions $c$ and $d$ on the left-hand and right-hand side, respectively,
where $d$ is the same region as $a$ of the tangle $T$.
The corresponding graph $\Gamma=\Gamma_D$ is also given in Figure \ref{fig:flype1}.

\begin{figure}[htbp]
\centering
\begin{subfigure}{0.44\textwidth}
\centering
\resizebox{1.2\textwidth}{!}{
\begin{tikzpicture}
    \draw[thick] (0,1) rectangle (1.5,-1);
    \node at (0.75,0) {T};

    \draw[thick] (5,1) rectangle (6.5,-1);
    \node at (5.75,0) {R};

    \draw[thick] (3,0) ellipse (1 and 0.7);
    \node at (3,1) {E};

    \begin{scope}
        \fill[gray!30] 
            (0,0.5) .. controls (-4,-5) and (10,-5) .. (6.5,0.5) 
            -- (6.5,-0.5) .. controls (8,-3.3) and (-2,-3.3) .. (0,-0.5) -- cycle;
    \end{scope}
    \begin{scope}
        \fill[gray!30]
            (1.5,0.5) .. controls (3,1.5) and (3,-1.5) .. (5,-0.5)
            -- (5,0.5) .. controls (4,0.7).. (3.13,0)
            -- (3,-0.1) .. controls (2.2,-0.8)  .. (1.5,-0.5) -- cycle;
    \end{scope}
   
    \draw[red, thick]
        (1.5,0.5) 
        .. controls (3,1.5) and (3,-1.5) .. (5,-0.5); 

    \draw[red, thick]
        (1.5,-0.5) 
        .. controls (2.2,-0.8) .. (3,-0.1); 
    \draw[red, thick]
        (3.13,0) 
        .. controls (4,0.7).. (5,0.5); 
    \draw[red, thick]
        (0,0.5) 
        .. controls (-4,-5) and (10,-5) .. (6.5,0.5); 
    \draw[red, thick]
        (0,-0.5) 
        .. controls (-2,-3.3) and (8,-3.3) .. (6.5,-0.5); 

    \node[left] at (0,-0.3) {$a$};
    \node[right] at (6.4,-0.3) {$d$};
    \node[above] at (3.1,0) {$e$};
    \node[right] at (1.5,0) {$b$};
    \node[left] at (4.8,0) {$c$};

    \node at (3,-3) {$a=d$};
\end{tikzpicture}
}
\caption{Diagram $D$.}
\label{fig:flype1_D}
\end{subfigure}
\begin{subfigure}{0.45\textwidth}
\centering
\resizebox{1.2\textwidth}{!}{
\begin{tikzpicture}
    \draw[thick] (0,1) rectangle (1.5,-1);
    \node at (0.6,-0.1) {T};

    \draw[thick] (5,1) rectangle (6.5,-1);
    \node at (5.9,-0.1) {R};

    \draw[black, thick]
    (0.15,0.8)  
    .. controls (-3,-3) and (1,-4) .. (3,-3);
    \draw[black, thick]
    (0.15,0.4)  
    .. controls (-2.5,-2.5) and (1,-3.5) .. (3,-3);
    \draw[black, thick]
    (0.15,-0.5)  
    .. controls (-1.2,-1) and (1,-2.5) .. (3,-3);
    \draw[black, thick]
    (3,-3)  
    .. controls (5,-2.5) and (7.6,-1.1) .. (6.35,-0.5);
    \draw[black, thick]
    (3,-3)  
    .. controls (5,-3.5) and (8.5,-2.5) .. (6.35,0.4);
    \draw[black, thick]
    (3,-3)  
    .. controls (5,-4) and (9,-3) .. (6.35,0.8);

    \draw[black, thick]
    (0.9,0.8) .. controls (1.15,0.4) .. (1.35,0);
    \draw[black, thick]
    (0.9,0.4) .. controls (1.15,0.2) .. (1.35,0);
    \draw[black, thick]
    (0.9,-0.5) .. controls (1.15,-0.25) .. (1.35,0);

    \draw[black, thick]
    (5.4,0.8) .. controls (5.25,0.4) .. (5.1,0);
    \draw[black, thick]
    (5.4,0.4) .. controls (5.25,0.2) .. (5.1,0);
    \draw[black, thick]
    (5.4,-0.5) .. controls (5.25,-0.25) .. (5.1,0);

    \draw[black, thick]
    (5.1, 0) .. controls (1.88,0) .. (1.35,0);

    \node at (5.4,0) {$\vdots$};
    \node at (1,0) {$\vdots$};
    \node at (5,-2.5) {$\vdots$};
    \node at (1,-2.5) {$\vdots$};
    \node at (6.35,0) {$\vdots$};
    \node at (0.15,0) {$\vdots$};
    \node at (3,0.2) {$e$};

    \node[above] at (5.1,0.2) {$c$};
    \node[above] at (1.35,0.1) {$b$};
    \node[above] at (3,-2.8) {$a=d$};
\end{tikzpicture}
}
\caption{The graph $\Gamma$ of $D$.}
\label{fig:flype1_T}
\end{subfigure}
 \caption{Diagram $D$ and its graph $\Gamma$ before flype.}
 \label{fig:flype1}
\end{figure}

If we fix $R$ and turn the tangle $T$ along the horizontal axis $180$ degrees so that the twist on the right-hand side is undone, 
we get the tangle $\perp$ and the Tait diagram $D'$ of an equivalent alternating link, with a new twist on the left-hand side, which is again denoted by $E$. 
See Figure \ref{fig:flype2} with tangles $E$, $\perp$ and $R$ and the corresponding graph~$\Gamma '=\Gamma_{D'}$. The tangle $E$ consists of a simple twist with black regions $d$ and $a$ on the left-hand and right-hand side, respectively. 
Tangle $\perp$ has black regions $a$ and $b$ on the left-hand and right-hand side, respectively.
Tangle $R$ has black regions $c$ and $d$ on the left-hand and right-hand side, respectively,
where $c$ is the same region as $b$ of the tangle $\perp$.
The corresponding graph~$\Gamma'=\Gamma_{D'}$ is also given in Figure \ref{fig:flype2}.

\begin{figure}[htbp]
\centering
\begin{subfigure}{0.49\textwidth}
\centering
\resizebox{1.1\textwidth}{!}{
\begin{tikzpicture}
\draw[thick] (2.5,1) rectangle (4,-1);
    \node at (3.25,0) {\rotatebox{180}{T}};

    \draw[thick] (5.5,1) rectangle (7,-1);
    \node at (6.3,0) {R};

    \draw[thick] (0.75,0) ellipse (1 and 0.7);
    \node at (0.75,1) {E};

    \begin{scope}
        \fill[gray!30] 
            (7,0.5) .. controls (10,-5) and (-4,-5) .. (-0.5,0) 
            .. controls (0,0.5) and (1.5,-0.5) .. (2.5,-0.5) 
            -- (2.5,0.5) .. controls (1.5,0.5) ..(0.5,0.1) 
            -- (0.4,0)
            .. controls (0.2,-0.2) .. (0,-0.5)
            .. controls (-2,-3.3) and (8,-3.3) .. (7,-0.5) -- cycle;
    \end{scope}

    \fill[gray!30] 
        (3.8,-0.5) -- (5.7,-0.5) 
        -- (5.7,0.5) -- (3.8,0.5) -- cycle;

    \draw[red, thick] (5.7,-0.5) -- (3.8,-0.5);
    \draw[red, line width=0.4mm] (5.7,0.5) -- (3.8,0.5);

    \draw[red, thick]
    (7,0.5) .. controls (10,-5) and (-4,-5) .. (-0.5,0) 
    .. controls (0,0.5) and (1.5,-0.5) .. (2.5,-0.5);
    \draw[red, thick]
    (7,-0.5) .. controls (8,-3.3) and (-2,-3.3) .. (0,-0.5) 
    .. controls (0.2,-0.2).. (0.4,0);
    \draw[red, thick]
    (0.5,0.1) .. controls (1.5,0.5) .. (2.5,0.5);
    \node[right] at (4.2,0) {$b=c$};

    \node at (-0.4,-0.5) {$d$};
    \node at (2.1,0) {$a$};
     \node at (7.2,-0.5) {$d$};
\end{tikzpicture}
}
\caption{Diagram $D'$.}
\label{fig:flype2_D}
\end{subfigure}
\hfill 
\begin{subfigure}{0.49\textwidth}
\centering
\resizebox{1.1\textwidth}{!}{
\begin{tikzpicture}
\draw[thick] (0,1) rectangle (1.5,-1);
    \node at (0.6,-0.1) {\rotatebox{180}{T}};

    \draw[thick] (5,1) rectangle (6.5,-1);
    \node at (5.9,-0.1) {R};

    \draw[black, thick] (0.15,0.8) -- (-1,0);
    \draw[black, thick] (0.15,0.4) -- (-1,0);
    \draw[black, thick] (0.15,-0.5) -- (-1,0);
    \draw[black, thick] (6.35,0.8) -- (7.2,0);
    \draw[black, thick] (6.35,0.4) -- (7.2,0);
    \draw[black, thick] (6.35,-0.5) -- (7.2,0);

    \draw[black, thick]
    (-1,0)  
    .. controls (-3,-3) and (9,-3) .. (7.2,0);

    \draw[black, thick]
    (0.9,0.8) .. controls (1.2,0.6) .. (3,0);
    \draw[black, thick]
    (0.9,0.4) .. controls (1.2,0.2) .. (3,0);
    \draw[black, thick]
    (0.9,-0.5) .. controls (1.2,-0.3) .. (3,0);


    \draw[black, thick]
    (5.4,0.8) .. controls (5,0.6) .. (3,0);
    \draw[black, thick]
    (5.4,0.4) .. controls (5,0.2) .. (3,0);
    \draw[black, thick]
    (5.4,-0.5) .. controls (5,-0.3) .. (3,0);

    \node[above] at (7.2,0) {$d$};
    \node[above] at (-1,0) {$a$};
    
    \node at (5.4,0) {$\vdots$};
    \node at (1,0) {$\vdots$};
    
    \node at (6.35,0) {$\vdots$};
    \node at (0.15,0) {$\vdots$};
    \node at (3,0.5) {$b=c$};

    \node[above] at (3,-2.8) {$e$};
\end{tikzpicture}
}
\caption{The graph $\Gamma'$ of $D'$.}
\label{fig:flype2_T}
\end{subfigure}
 \caption{Diagram $D'$ and its graph $\Gamma'$ after flype.}
 \label{fig:flype2}
\end{figure}


The nodes $a$ and $b$ are part of the sub-graph $T$. Similarly, this is the case for the sub-graph~$\perp$.  
The nodes $c$ and $d$ are part of the sub-graph $R$. 

In $D$ and the graph $\Gamma$ of $D$, $a$ and $d$ are identified, and in $D'$ and the graph $\Gamma'$ of $D'$, b and c are identified.

\begin{remark}
\label{r-flype-tangle}
The graphs $T$ and $\perp$ are isomorphic under the rotation, and they are equal when we identify the corresponding vertices and edges.
But the direction of the edges may be inverted. In fact, if the orientation of the knot is in/out of the black regions at the crossing, 
then the direction of the corresponding edge in the graph $T$ is inverted for the edge of $\perp $. 
If the orientation of the knot is in/in and out/out at the black regions at the crossing, then the direction of the edges of $T$ and
$\perp $ is the same.
\end{remark}

Similar to Definition \ref{def:Reide_equiv}, one can define equivalence with respect to flyping.

\begin{definition}
\label{def:Flype_equiv}
Two diagrams $D$ and $D'$ are called \textbf{flype equivalent} if $D$ can be transformed into~$D'$ by using a finite sequence of flype moves.
\end{definition}

The following theorem was conjectured by Tait, see \cite{tait1898}, \cite[\S 2]{thistlethwaite1985} and \cite[Chapter 11 \S 5]{murasugi1996knot}. It was proved by Menasco and Thistlethwaite in \cite{menasco1991-1} and \cite{menasco1993}. 

\begin{theorem}[Tait's Flyping Conjecture]
\label{thm-tait-conj}
Suppose $D_1$ and $D_2$ are two reduced alternating diagrams of a prime links $K_1$ and $K_2$ that are equivalent. Then $D_1$ and $D_2$ are flype equivalent.
\end{theorem}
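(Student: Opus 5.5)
The plan is to follow Greene's lattice-theoretic route rather than the original geometric argument of Menasco and Thistlethwaite, because it fits the graph/code language of this paper. The key translation is that a flype, read on the black graph $\Gamma_D$, is exactly a Whitney \emph{twist}: in Figure \ref{fig:flype1} the subgraph $T$ meets the rest of $\Gamma$ only in the two vertices $a$ and $b$, and turning $T$ over re-attaches it with $a$ and $b$ swapped (Figure \ref{fig:flype2}). Reidemeister-type changes of planar embedding of the same abstract graph correspond to planar isotopy or to connected sum, which are excluded or trivial because $D_1,D_2$ are prime and reduced.

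Whitney's $2$-isomorphism theorem says two $2$-connected graphs have the same cycle matroid (here: the same cycle code $\mC_\Gamma$, up to permuting coordinates) if and only if they are related by a sequence of twists. The theorem therefore reduces to two claims.
\begin{enumerate}
\item Primeness and reducedness of $D_i$ make the Tait graphs $2$-connected and loopless/bridgeless. A cut vertex gives a connected sum decomposition; a loop or bridge gives a nugatory crossing.
\item The link $K_1=K_2$ determines the cycle matroid of the black graph, or of the white graph, of any reduced alternating diagram. Theorem \ref{p-dual-black-white} lets us pass between the two, since the white cycle code is the dual.
\end{enumerate}
Once both hold, $\Gamma_{D_1}$ and $\Gamma_{D_2}$ (after possibly interchanging colours) are $2$-isomorphic, hence related by twists. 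Each twist is realised on diagrams by a flype, so $D_1$ and $D_2$ are flype equivalent.

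For claim (2), I would work with the checkerboard surfaces. For an alternating diagram, the black and white surfaces are respectively negative and positive definite: their Gordon--Litherland forms are the Goeritz matrices. These Gram matrices are the flow (cycle) lattice and the cut lattice of $\Gamma_D$ over $\Z$.

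The first step is topological. Any spanning surface of $K$ whose Gordon--Litherland form is definite can be recognised intrinsically, and the positive and negative definite ones embed as complementary lattices in $H_2$ of a $4$-manifold bounded by the double branched cover $\Sigma(K)$. This pins the pair down via the linking form on $H_1(\Sigma(K))$, which depends only on $K$.

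The second step is combinatorial. Any two such definite lattices for $K$ must be isomorphic. Then, by the theorem that a graphic flow lattice determines the $2$-isomorphism class of a $2$-edge-connected graph (Greene's extension of Whitney's theorem to lattices), the cycle matroid of $\Gamma_{D_1}$ equals that of $\Gamma_{D_2}$.

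The main obstacle is the uniqueness step in claim (2): showing that every definite spanning surface of $K$ has the same lattice up to isometry. My first attempt would use Donaldson's diagonalisation theorem. Glue the positive definite surface's $4$-manifold to the reverse of the negative one's along $\Sigma(K)$; the resulting closed definite manifold has a diagonal form $\Z^n$. The two Goeritz lattices then appear as orthogonal complements inside $\Z^n$, which forces the rigidity. If this proves too heavy, the fallback is Menasco's crossing-ball standard position: place a sphere realising $D_2$ in normal position with respect to the bubbles of $D_1$, and induct on intersection complexity, showing that each simplification is a flype.
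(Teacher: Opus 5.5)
\textbf{The gap: Whitney twists are mutations, not flypes.} The step that fails is the last one, ``each twist is realised on diagrams by a flype.'' The implication you need runs the wrong way. Every flype induces a $2$-isomorphism of Tait graphs; this is what the paper's Theorem \ref{thm-code-alternating} exploits. The converse is false.

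A Whitney twist at a $2$-separation $\{u,v\}$ of $\Gamma_D$ is, on the diagram, a half-turn of the tangle cut out by a Conway circle through the black regions $u$ and $v$. That is a \emph{mutation}. A flype (Figure \ref{fig:flype1}) is only the special case in which a tangle is carried across an adjacent crossing. So even granting claims (1) and (2), Whitney's theorem only tells you that $D_1$ and $D_2$ are related by mutations. No sharpening of your data can do better: the cycle matroid, the Goeritz lattices and $\Sigma(K)$ are all unchanged by mutation. Theorem \ref{thm-mutation} is the code version of this, and mutants have homeomorphic double branched covers.

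If your chain were valid, any two mutant reduced alternating diagrams would be flype equivalent and hence represent the same knot. That is false: there are distinct prime alternating knots that are mutants of each other (already at $11$ crossings), and the flyping theorem is the standard way to tell them apart. Completed properly, your route proves Greene's theorem that a reduced alternating diagram is determined up to mutation by $\Sigma(K)$. It does not prove Tait's conjecture, which needs information about how the surfaces or Conway spheres sit in $S^3$ — exactly what the lattices forget.

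\textbf{Further gaps and comparison with the paper.} Claim (2) also has internal gaps. The linking form on $H_1(\Sigma(K))$ does not determine a definite lattice up to isometry. Donaldson's theorem only realises the two Goeritz lattices as mutual orthogonal complements in $\Z^n$; it does not show that different diagrams give isometric lattices. Greene needs sharpness of the branched-cover $4$-manifolds and Heegaard Floer $d$-invariants for that.

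The paper itself does not prove this statement. It imports it from Menasco and Thistlethwaite \cite{menasco1991-1,menasco1993}, whose proof is a geometric standard-position analysis relative to the crossing balls of a reduced alternating diagram, with some input from the Jones polynomial. Your fallback names that method, but ``induct on intersection complexity, showing that each simplification is a flype'' is the entire content of their work rather than a plan. A later purely geometric proof by Kindred does use definite checkerboard surfaces, but it tracks the surfaces up to isotopy in $S^3$, not merely their lattices.
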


Flype-equivalent diagrams are also Reidemeister equivalent (in the sense of Definition \ref{def:Reide_equiv}),
but the converse is not always true. What Tait's conjecture says is that Reidemeister equivalence is the same as flype equivalence for the class of
the reduced alternating diagrams of prime links. In the following remark, we now explain why the knot is considered in $S^3$ rather than $\R^3$.

\begin{remark}
\label{rem:whyS3}
In \cite{menasco1993}, where the Tait conjecture is proved, 
the knot is considered to lie in the three-sphere. 
If we were to consider the knots in $R^3$ instead of the 3-sphere, then the conjecture would have been false. Consider the diagram $D$ of the trefoil knot as in Figure \ref{fig:signedgraph},
and the diagram $D'$ as the same diagram $D$ but with the point at infinity in a bounded white region of $D$ next to the unbounded region. Then the black diagrams of $D$ and $D'$ are (Reidemeister) equivalent, but not flype equivalent.
\end{remark}

We conclude this section with an application of Theorem \ref{thm-tait-conj}.

\begin{proposition}
\label{p-tutte}
The 2-variable Tutte polynomial of the graph $\Gamma _D$ of a reduced Tait diagram $D$ of a prime alternating knot is an invariant. 
 \end{proposition}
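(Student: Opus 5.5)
By Theorem \ref{thm-tait-conj}, any two reduced alternating diagrams of the same prime alternating knot are related by a finite sequence of flypes. It therefore suffices to show that a single flype $D \to D'$ leaves the 2-variable Tutte polynomial $T_{\Gamma}(x,y)$ of the black graph unchanged.

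One subtlety is that the black graph depends on the checkerboard convention. Since knots are taken in $S^3$, moving the point at infinity can swap black and white. The white graph $\Gamma_{D^*}$ is the planar dual of $\Gamma_D$, so its Tutte polynomial is $T_{\Gamma_D}(y,x)$. I would fix the convention of Section \ref{sec:Tait}, in which black regions see a counterclockwise rotation from overstrand to understrand. This choice is determined by the alternating diagram itself, so it avoids the ambiguity discussed in Remark \ref{rem:whyS3}.

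The core step compares $\Gamma$ and $\Gamma'$ from Figures \ref{fig:flype1} and \ref{fig:flype2}. Write $\Gamma$ as the union of the subgraphs $T$ and $R$ together with the single twist edge $e$ joining $b$ and $c$, where $a=d$ is identified. Then $\Gamma'$ is the union of $\perp\cong T$ (Remark \ref{r-flype-tangle}, ignoring directions, which the Tutte polynomial does not see) and $R$, with the edge $e$ now joining $a$ and $d$, and with $b=c$ identified.

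I would compute both polynomials by deletion–contraction on $e$. If $e$ is a bridge or a loop, the reduced hypothesis excludes it, since such an edge corresponds to a nugatory crossing. Otherwise we get
$$T_\Gamma = T_{\Gamma\setminus e}+T_{\Gamma/e}.$$
Now $\Gamma/e$ is $T\cup R$ glued at both $a=d$ and $b=c$, and so is $\Gamma'/e$, so these two contractions are the same graph. Similarly, $\Gamma\setminus e$ is $T$ and $R$ glued only at the single vertex $a=d$, while $\Gamma'\setminus e$ is $T$ and $R$ glued only at $b=c$. Both are one-point joins of $T$ and $R$. Since the Tutte polynomial of a one-point join is the product of the Tutte polynomials of the pieces, the two deletions agree. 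This gives $T_\Gamma=T_{\Gamma'}$.

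The main obstacle is the last identification. One must check that $\Gamma\setminus e$ really is a one-point union, that is, that $T$ and $R$ share no vertex other than $a=d$. If they also shared $b$ with $c$, the tangle decomposition would not be a flype in the sense above. The same check is needed for $\Gamma'\setminus e$ at $b=c$. I would justify this from the structure of the tangles in the figures: $b$ lies strictly inside $T$ and $c$ strictly inside $R$. A second point to handle is the case where $E$ consists of several twists rather than a single crossing. I would treat that by induction on the number of crossings in the twist, or by viewing it as a sequence of single-crossing flypes.
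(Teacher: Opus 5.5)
Your route is the one the paper relies on. The paper proves the proposition by citing Thistlethwaite's remark, worked out by Petersen, that the Tutte polynomial is invariant under a flype via its recurrence formula. Your decomposition, $\Gamma\setminus e=(T\sqcup R)/(a,d)$ and $\Gamma/e=(T\sqcup R)/((a,b),(d,c))$, is exactly the one the paper reuses in the proof of Theorem \ref{thm-code-alternating}. The worries in your last paragraph are already settled. First, $b\neq c$ is just your observation that $e$ is not a loop, and the interiors of the two tangle disks are disjoint anyway. Second, a flype in the paper's sense moves a single crossing, so no induction over twists is needed.

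What is missing is one case. Having fixed the canonical colouring, you assume that the black regions at the flyped crossing $e$ are $b$ and $c$, as in Figures \ref{fig:flype1} and \ref{fig:flype2}. But which opposite pair of regions at $e$ is black is decided by the handedness of $e$, and both handednesses occur (pass to the mirror diagram, for instance). Suppose the black regions at $e$ are the ones above and below the row $T$, $e$, $R$. Then $\Gamma$ is $T$, $e$ and $R$ glued in parallel between these two vertices. $\Gamma'$ is the same except that $T$ is attached with its two terminals exchanged, because the rotation swaps top and bottom. The contractions $\Gamma/e$ and $\Gamma'/e$ still agree. However, $\Gamma\setminus e$ and $\Gamma'\setminus e$ are now two-vertex gluings of $T$ and $R$ differing by a Whitney twist, not one-point joins, so your deletion step does not apply.

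The fix uses the duality you already wrote down. In this case the \emph{white} graphs are in the configuration of the figures, so your argument gives $T_{\Gamma_{D^*}}=T_{\Gamma_{(D')^*}}$. The identity $T_{\Gamma^*}(x,y)=T_{\Gamma}(y,x)$ then transfers this to the black graphs. This is the role of the sentence ``the proof works only for one of the two black/white diagrams, but Theorem \ref{p-dual-black-white} on duality remedies this'' in the paper's proof of Theorem \ref{thm-code-alternating}. Alternatively, observe that in either case a flype changes the Tait graph by Whitney twists, so $\Gamma$ and $\Gamma'$ have the same cycle matroid and hence the same Tutte polynomial.
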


\begin{proof}
It was noted by Thistlethwaite in \cite[page 316]{thistlethwaite1988} that the Tutte polynomial of the graph $\Gamma_D$ of a reduced Tait diagram of a prime alternating knot is invariant under flyping by the short remark:
``..., it is certainly true, and easy to check from its recurrence formula, that Tutte polynomial 
is invariant under Tait’s flyping operation.''. This was worked out in \cite[Proposition 2.3]{petersen2007}. 
\end{proof}

\section{Coding Theory Invariant for Prime Alternating Knots}
\label{sec:newinvariant}

The goal of this section is to construct a new invariant by using the Alexander–Briggs colorings and the associated codes. Building on the coding‑theoretic framework and Tait's flyping conjecture, we obtain the new invariant. This is done by showing that the ($\pm1$)-permutation equivalent class of the cycle code of a prime alternating knot is an invariant.

The reason for using the ($\pm1$)-permutation equivalence, rather than the more standard equivalences in coding theory, permutation or monomial, is explained next.

\begin{definition}
Two linear codes are called \textbf{($\pm1$)-permutation equivalent} if they can be obtained from each other by only permuting the coordinates and multiplying the coordinates with $\pm 1$. 
\end{definition}

\begin{remark}
\label{rem:hullofcodes}
In \cite[Remark 6.9]{kilic2024knot}, it is noted that if two codes are ($\pm1$)-permutation equivalent, then their hulls (intersection of the code with its dual) are also ($\pm1$)-permutation equivalent. This is the key property we need and it is not true for monomial equivalence of codes.     
\end{remark}

We start with the main theorem of this section, after a lemma needed in its proof. 

\begin{lemma} (see \cite[Proposition 7.13]{kilic2024knot})
\label{p-sum-code-graph-pol} 
Let $\Gamma_1$ be a directed graph with a node $p_1$ and $\Gamma_2$ be a directed graphs with a node $p_2$. Let $\Gamma =(\Gamma_1 \sqcup \Gamma_2 )/ (p_1,p_2)$ be a disjoint sum of these graphs obtained by identifying (gluing together) $p_1$ and $p_2$. We have
$$
C_{\Gamma}=C_{\Gamma_1} \oplus C_{\Gamma_2}
$$  
\end{lemma}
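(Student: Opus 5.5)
The plan is to compare the parity check matrices directly, working with the cycle codes, i.e. $t=-1$ as in Remark \ref{r-A(t)}. The edge set of $\Gamma=(\Gamma_1\sqcup\Gamma_2)/(p_1,p_2)$ is the disjoint union $E_1\sqcup E_2$ of the edge sets of $\Gamma_1$ and $\Gamma_2$. So after ordering the coordinates with the edges of $E_1$ first, every word of length $|E_1|+|E_2|$ can be written as $x=(x_1,x_2)$. The vertex set of $\Gamma$ is $(V_1\setminus\{p_1\})\sqcup(V_2\setminus\{p_2\})\sqcup\{p\}$, where $p$ is the glued vertex. I will write $A$, $A_1$, $A_2$ for $A(-1)$ of $\Gamma$, $\Gamma_1$, $\Gamma_2$, and $r^{(j)}_v$ for the row of $A_j$ indexed by a vertex $v$.

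First I read off the rows of $A$.
\begin{itemize}
\item For $v\in V_1\setminus\{p_1\}$, the row of $A$ at $v$ is $(r^{(1)}_v,0)$, since no edge of $E_2$ is incident to $v$.
\item Symmetrically, for $v\in V_2\setminus\{p_2\}$ the row is $(0,r^{(2)}_v)$.
\item The row at $p$ is $(r^{(1)}_{p_1},r^{(2)}_{p_2})$, because the edges incident to $p$ are exactly those incident to $p_1$ in $\Gamma_1$ or to $p_2$ in $\Gamma_2$, with the same directions.
\end{itemize}
Consequently $x=(x_1,x_2)\in\mC_\Gamma$ if and only if three conditions hold: $r^{(1)}_v\cdot x_1=0$ for all $v\ne p_1$; $r^{(2)}_v\cdot x_2=0$ for all $v\ne p_2$; and $r^{(1)}_{p_1}\cdot x_1+r^{(2)}_{p_2}\cdot x_2=0$.

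The key step uses the dependency noted in Remark \ref{r-A(t)}: for $t=-1$ the rows of $A_j$ sum to zero. This holds because each non-loop edge column contains exactly one $1$ and one $-1$, and a loop column is $1+t=0$. Hence $r^{(1)}_{p_1}=-\sum_{v\ne p_1}r^{(1)}_v$, so the conditions at the vertices of $V_1\setminus\{p_1\}$ already force $r^{(1)}_{p_1}\cdot x_1=0$, which gives $x_1\in\mC_{\Gamma_1}$. Likewise $x_2\in\mC_{\Gamma_2}$, and then the condition at $p$ holds automatically. Conversely, if $x_1\in\mC_{\Gamma_1}$ and $x_2\in\mC_{\Gamma_2}$, all three conditions are clearly satisfied. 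This yields $\mC_\Gamma=\mC_{\Gamma_1}\oplus\mC_{\Gamma_2}$ with respect to the coordinate split $E_1\sqcup E_2$.

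The only point needing care is this use of the row dependency, which is exactly where $t=-1$ matters. For a general $t$ the row of $A(t)$ at $p_1$ need not lie in the span of the other rows, and the splitting can fail. I would therefore state explicitly that the lemma concerns the cycle codes $\mC_\Gamma=\mC_{\Gamma,-1}$. I would also check that loops and multiple edges cause no trouble: a loop gives a zero column and so a free coordinate on both sides. Beyond that, the argument is pure bookkeeping.
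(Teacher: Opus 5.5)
Your proof is correct. The paper gives no argument of its own for this lemma; it only cites \cite[Proposition 7.13]{kilic2024knot}. Your argument is the natural one: the row of $A(-1)$ at the glued vertex $p$ is redundant because the rows sum to zero, so what remains is the block-diagonal matrix built from $A_1$ and $A_2$ with the rows of $p_1$ and $p_2$ deleted. This is the same row-deletion device the paper uses in the proof of Theorem \ref{thm-code-alternating}.

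Your caveat that $t=-1$ is essential is also right. Glue the digons $u\to p_1\to u$ and $w\to p_2\to w$; then $\mC_{\Gamma,t}$ is spanned by $(-t,1,t,-1)$, while $\mC_{\Gamma_1,t}=\mC_{\Gamma_2,t}=0$ whenever $1-t^2$ is a unit.
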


\begin{theorem}\label{thm-code-alternating}
The ($\pm1$)-permutation equivalence class of the cycle code $\mC_{\Gamma_D}$ of the graph~$\Gamma_D$ of a reduced Tait diagram $D$ of a prime alternating knot is 
an invariant of the knot.
\end{theorem}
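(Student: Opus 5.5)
By Theorem \ref{thm-tait-conj}, any two reduced Tait diagrams $D$ and $D'$ of the same prime alternating knot are related by a finite sequence of flypes. It therefore suffices to show that a single flype $D \to D'$, as in Figures \ref{fig:flype1} and \ref{fig:flype2}, changes the cycle code $\mC_{\Gamma_D}$ only up to ($\pm1$)-permutation equivalence. The natural bijection on coordinates is the one that matches the edges of $T$ with the corresponding edges of $\perp$ (Remark \ref{r-flype-tangle}), matches the edges of $R$ with themselves, and matches the twist edge $e$ with $e$. By Remark \ref{r-flype-tangle}, an edge of $T$ may have its direction reversed in $\perp$. In the cycle code with $t=-1$, reversing an edge multiplies its column of the incidence matrix $A(-1)$ by $-1$. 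So after multiplying those coordinates by $-1$, we may assume that $T$ and $\perp$ are literally the same directed graph, with vertices $a,b$ in the same roles. We may likewise normalise the direction of $e$. After these adjustments, the goal is to prove that the two codes coincide, $\mC_{\Gamma_D}=\mC_{\Gamma_{D'}}$, as subcodes of the same coordinate space.

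To show equality, apply Proposition \ref{p-code-punct-short} at the coordinate $i$ of the twist edge $e$. This reduces the problem to comparing the deletions and the contractions, using Proposition \ref{p-contr-del-punct-short}.

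\textbf{Contraction.} In $\Gamma$, the edge $e$ joins $b$ and $c$, while $a=d$. Contracting $e$ glues $T$ and $R$ at two vertex pairs: $b$ with $c$, and $a$ with $d$. In $\Gamma'$, the edge $e$ joins $a$ and $d$, while $b=c$. Contracting it again identifies $a$ with $d$ and $b$ with $c$. Hence $\Gamma/e$ and $\Gamma'/e$ are the same graph, and the punctured codes $(\mC_{\Gamma})_i$ and $(\mC_{\Gamma'})_i$ agree.

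\textbf{Deletion.} Deleting $e$ from $\Gamma$ leaves $T$ and $R$ glued at the single vertex $a=d$. Deleting $e$ from $\Gamma'$ leaves $\perp\cong T$ and $R$ glued at the single vertex $b=c$. By Lemma \ref{p-sum-code-graph-pol}, both cycle codes decompose as $\mC_T\oplus\mC_R$, with the coordinates arranged identically in the two cases. So the shortened codes $(\mC_{\Gamma})^i$ and $(\mC_{\Gamma'})^i$ agree as well. Proposition \ref{p-code-punct-short} then gives $\mC_{\Gamma_D}=\mC_{\Gamma_{D'}}$ after the sign changes.

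\textbf{Main obstacle.} The hard part is the bookkeeping of orientations. First, one has to check that the sign changes needed to turn $\perp$ into $T$ can be made consistently, with no clash with the edges of $R$ or with $e$. Second, the direction of $e$ changes in the flype, and Proposition \ref{p-code-punct-short} requires the codes to sit in a common coordinate space with identical punctured and shortened codes. I would handle both by working with $A(-1)$ throughout. There, every sign change of a coordinate is an allowed ($\pm1$)-operation, and Remark \ref{r-flype-tangle} says exactly which coordinates need one. The flype can also be applied to a twist consisting of several crossings, but each such move can be decomposed into single-crossing flypes, so the argument above applies to each step in turn. Finally, since the black graph of $D^*$ is handled symmetrically, or by duality via Theorem \ref{p-dual-black-white}, the choice of checkerboard coloring does not affect the conclusion.
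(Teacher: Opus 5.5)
Your outline is the paper's own proof: reduce to one flype, apply Proposition \ref{p-code-punct-short} at the twist edge $e$, handle the deletion via Lemma \ref{p-sum-code-graph-pol}, observe that the contractions agree, and use duality for the other colouring. The only difference is that you normalise the signs on $T$ at the start rather than carrying $(x,y)*$ along. The gap is in the final step, and it is inherited from Proposition \ref{p-code-punct-short}, which is false as stated.

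\textbf{The gap.} The shortened and punctured codes at position $i$ determine a code at best up to multiplying its $i$-th coordinate by a nonzero scalar. For instance, over $\F_3$ the codes $\langle(1,1)\rangle$ and $\langle(1,-1)\rangle$ both have shortened code $\{0\}$ and punctured code $\F_3$ at $i=1$. The proof of that proposition tacitly assumes the same vector $(1,c)$ lies in both codes. One only knows $(\alpha,c)\in\mC$ and $(\beta,c)\in\mD$ with $\alpha,\beta\neq0$. The proposition also fails if the $i$-th unit vector lies in one of the codes.

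In your setting this is exactly the orientation issue you flag but never resolve. The graphs $\Gamma\setminus e$ and $\Gamma/e$ do not see the direction of $e$. So your argument would give $\mC_{\Gamma_D}=\mC_{\Gamma_{D'}}$ for both choices of sign on the $e$-coordinate. In odd characteristic that is impossible: it would put the unit vector at $e$ into the cycle code, i.e.\ make $e$ a loop, whereas $e$ joins $b\neq c$. Thus ``we may likewise normalise the direction of $e$'' is not backed by anything in the argument. Moreover, for $q\geq4$ even the corrected proposition leaves an unknown scalar $\lambda\in\F_q^*$ on $e$, not a priori $\pm1$.

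\textbf{The fix.} Pin down the $e$-coordinate with one codeword, namely a cycle through $e$.

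Since $a=d$ is a cut vertex of $\Gamma\setminus e$, a cycle of $\Gamma$ through $e$ consists of three pieces: the edge $e$, an $a$--$b$ path $P$ in $T$, and a $c$--$d$ path $Q$ in $R$. These paths exist because $e$ is not a bridge, the diagram being reduced.

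The same edge set is a cycle of $\Gamma'$. There, $P$ in $\perp$ and $Q$ in $R$ meet at $b=c$, and $e$, which joins $d$ and $a$, closes them up. This cycle is traversed in the same direction along $P$ and along $Q$ as in $\Gamma$.

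Hence, after your normalisation of $T$, the signed cycle vectors $z\in\mC_{\Gamma_D}$ and $z'\in\mC_{\Gamma_{D'}}$ agree off $e$ and equal $\pm1$ at $e$. Now choose the sign of the $e$-coordinate so that $z=z'$.

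Place the $e$-coordinate first. Since $z_e\neq0$, every $w\in\mC_{\Gamma_D}$ satisfies $w-w_e z_e^{-1}z\in\{0\}\times\mC_{\Gamma\setminus e}$. So $\mC_{\Gamma_D}=\F_q z\oplus(\{0\}\times\mC_{\Gamma\setminus e})$, and likewise $\mC_{\Gamma_{D'}}=\F_q z\oplus(\{0\}\times\mC_{\Gamma'\setminus e})$.

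Your deletion step gives $\mC_{\Gamma\setminus e}=\mC_{\Gamma'\setminus e}$, which yields $\mC_{\Gamma_D}=\mC_{\Gamma_{D'}}$ directly. With this, the contraction step and Proposition \ref{p-code-punct-short} are not needed at all.
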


\begin{proof}
By Theorem \ref{thm-tait-conj}, it is enough to show the invariance under a flype.
The proof follows the same pattern as the proof for the Tutte polynomial of alternating links, see Proposition \ref{p-tutte}.  Consider a flype from the Tait diagrams $D$ to $D'$ as explained and shown in the figures of Section \ref{sec:Tait}. 

The edge $e$ corresponds to the twist in the tangle $E$ between the regions $b$ and $c$ in diagram $D$, 
connecting the nodes $b$ and $c$ in the graph $\Gamma$
in Figure \ref{fig:flype1}. In diagram $D'$, the edge $e$ corresponds to the twist in the tangle $E$ between the regions $a$ and $d$, 
connecting the nodes $a$ and $d$ in the graph $\Gamma'$
in Figure \ref{fig:flype2}.

The matrix $A(-1)$ (see 
Definition \ref{def:graphcode}) is a parity check matrix of the code $\mC$, and the sum of the rows of $A(-1)$ is the all-zero vector. 
Deleting the row corresponding to $c$ from~$A(-1)$ gives a matrix that is still a parity check matrix of $\mC$ by Remark \ref{r-A(t)}.
Similarly, the matrix~$A'(-1)$ is a parity check matrix of the code $\mC'$,
and by deleting the row corresponding to $d$ from $A'(-1)$ gives a matrix that is still a parity check matrix of $\mC'$. 

Deleting the edge $e$ in $\Gamma $ and $\Gamma'$ gives the graphs $\Gamma \setminus e$ and $\Gamma' \setminus e$, respectively, 
as shown in Figure \ref{fig:flype3}.

\begin{figure}[htbp]
\centering
\begin{subfigure}{0.49\textwidth}
\centering
\resizebox{1\textwidth}{!}{
\begin{tikzpicture}
\draw[thick] (0,1) rectangle (1.5,-1);
    \node at (0.6,-0.1) {T};

    \draw[thick] (5,1) rectangle (6.5,-1);
    \node at (5.9,-0.1) {R};

    \draw[black, thick]
    (0.15,0.8)  
    .. controls (-3,-3) and (1,-4) .. (3,-3);
    \draw[black, thick]
    (0.15,0.4)  
    .. controls (-2.5,-2.5) and (1,-3.5) .. (3,-3);
    \draw[black, thick]
    (0.15,-0.5)  
    .. controls (-1.2,-1) and (1,-2.5) .. (3,-3);
    \draw[black, thick]
    (3,-3)  
    .. controls (5,-2.5) and (7.6,-1.1) .. (6.35,-0.5);
    \draw[black, thick]
    (3,-3)  
    .. controls (5,-3.5) and (8.5,-2.5) .. (6.35,0.4);
    \draw[black, thick]
    (3,-3)  
    .. controls (5,-4) and (9,-3) .. (6.35,0.8);
    
    \draw[black, thick]
    (0.9,0.8) .. controls (1.2,0.6) .. (2.2,0);
    \draw[black, thick]
    (0.9,0.4) .. controls (1.2,0.2) .. (2.2,0);
    \draw[black, thick]
    (0.9,-0.5) .. controls (1.2,-0.3) .. (2.2,0);

    \draw[black, thick]
    (5.4,0.8) .. controls (5,0.6) .. (4.2,0);
    \draw[black, thick]
    (5.4,0.4) .. controls (5,0.2) .. (4.2,0);
    \draw[black, thick]
    (5.4,-0.5) .. controls (5,-0.3) .. (4.2,0);

    \node at (5.4,0) {$\vdots$};
    \node at (1,0) {$\vdots$};
    \node at (5,-2.5) {$\vdots$};
    \node at (1,-2.5) {$\vdots$};
    \node at (6.35,0) {$\vdots$};
    \node at (0.15,0) {$\vdots$};

    \node[above] at (3,-2.8) {$a=d$};
     \node[above] at (2.2,0) {$b$};
      \node[above] at (4.2,0) {$c$};
\end{tikzpicture}
}
\caption{Graph $\Gamma \setminus e$.}
\label{fig:flype3_v1}
\end{subfigure}
\hfill 
\begin{subfigure}{0.46\textwidth}
\centering
\resizebox{1\textwidth}{!}{
\begin{tikzpicture}
\draw[thick] (0,1) rectangle (1.5,-1);
    \node at (0.6,-0.1) {\rotatebox{180}{T}};

    \draw[thick] (5,1) rectangle (6.5,-1);
    \node at (5.9,-0.1) {R};

    \draw[black, thick] (0.15,0.8) -- (-1,0);
    \draw[black, thick] (0.15,0.4) -- (-1,0);
    \draw[black, thick] (0.15,-0.5) -- (-1,0);
    \draw[black, thick] (6.35,0.8) -- (7.2,0);
    \draw[black, thick] (6.35,0.4) -- (7.2,0);
    \draw[black, thick] (6.35,-0.5) -- (7.2,0);

    \draw[black, thick]
    (0.9,0.8) .. controls (1.2,0.6) .. (3.2,0);
    \draw[black, thick]
    (0.9,0.4) .. controls (1.2,0.2) .. (3.2,0);
    \draw[black, thick]
    (0.9,-0.5) .. controls (1.2,-0.3) .. (3.2,0);

    \draw[black, thick]
    (5.4,0.8) .. controls (5,0.6) .. (3.2,0);
    \draw[black, thick]
    (5.4,0.4) .. controls (5,0.2) .. (3.2,0);
    \draw[black, thick]
    (5.4,-0.5) .. controls (5,-0.3) .. (3.2,0);

    \node[above] at (7.2,0) {$d$};
    \node[above] at (-1,0) {$a$};
    
    \node at (5.4,0) {$\vdots$};
    \node at (1,0) {$\vdots$};
    
    \node at (6.35,0) {$\vdots$};
    \node at (0.15,0) {$\vdots$};
    \node[above] at (3,0.2) {$b=c$};
\end{tikzpicture}
}
\caption{Graph $\Gamma' \setminus e$.}
\label{fig:flype3_v2}
\end{subfigure}
 \caption{Graphs $\Gamma \setminus e$ and $\Gamma' \setminus e$.}
 \label{fig:flype3}
\end{figure}

Then, $\Gamma \setminus e$ is the union of the subgraphs $T$ and $R$ that have the unique node $a=d$ in common.
So, $\Gamma \setminus e = (T\sqcup R)/(a,d)$.
Hence $ \mC_{\Gamma \setminus e}=\mC_T \oplus \mC_R $ by Lemma \ref{p-sum-code-graph-pol}, see Figure \ref{fig:flype3}.

Similarly, $\Gamma' \setminus e$ is the union of the subgraphs $\perp$ and $R$ that have the unique node $b=c$ in common.
So, $\Gamma' \setminus e = (\perp \sqcup R)/(b,c)$.
Hence $ \mC_{\Gamma' \setminus e}=\mC_{\perp} \oplus \mC_R $ by Lemma \ref{p-sum-code-graph-pol}.
But~$\perp$ is the 180 degrees rotated image of $T$, except that the direction of some edges might be inverted by Remark \ref{r-flype-tangle}.
So, $\mC_T\cong \mC_{\perp }$, where the equivalence is modulo multiplication by $\pm 1 $ of coordinates. 
That is $\mC_T = x* \mC_{\perp }$, where $x$ is a vector with entries $\pm 1 $, and $*$ denotes the coordinate-wise multiplication.
Hence, $ \mC_{\Gamma' \setminus e} = (x,y)* \mC_{\Gamma \setminus e}$, where $y$ is the all-ones vector of length the number of edges of $R$.

Contracting the edge $e$ in $\Gamma $ and $\Gamma'$ gives the graphs $\Gamma / e$ and $\Gamma' / e$, respectively, 
as shown in Figure \ref{fig:flype4}.

\begin{figure}[htbp]
\centering
\begin{subfigure}{0.49\textwidth}
\centering
\resizebox{1.1\textwidth}{!}{
\begin{tikzpicture}
\draw[thick] (0,1) rectangle (1.5,-1);
    \node at (0.6,-0.1) {T};

    \draw[thick] (5,1) rectangle (6.5,-1);
    \node at (5.9,-0.1) {R};

    \draw[black, thick]
    (0.15,0.8)  
    .. controls (-3,-3) and (1,-4) .. (3,-3);
    \draw[black, thick]
    (0.15,0.4)  
    .. controls (-2.5,-2.5) and (1,-3.5) .. (3,-3);
    \draw[black, thick]
    (0.15,-0.5)  
    .. controls (-1.2,-1) and (1,-2.5) .. (3,-3);
    \draw[black, thick]
    (3,-3)  
    .. controls (5,-2.5) and (7.6,-1.1) .. (6.35,-0.5);
    \draw[black, thick]
    (3,-3)  
    .. controls (5,-3.5) and (8.5,-2.5) .. (6.35,0.4);
    \draw[black, thick]
    (3,-3)  
    .. controls (5,-4) and (9,-3) .. (6.35,0.8);

    \draw[black, thick]
    (0.9,0.8) .. controls (1.2,0.6) .. (3.2,0);
    \draw[black, thick]
    (0.9,0.4) .. controls (1.2,0.2) .. (3.2,0);
    \draw[black, thick]
    (0.9,-0.5) .. controls (1.2,-0.3) .. (3.2,0);

    \draw[black, thick]
    (5.4,0.8) .. controls (5,0.6) .. (3.2,0);
    \draw[black, thick]
    (5.4,0.4) .. controls (5,0.2) .. (3.2,0);
    \draw[black, thick]
    (5.4,-0.5) .. controls (5,-0.3) .. (3.2,0);

    \node at (5.4,0) {$\vdots$};
    \node at (1,0) {$\vdots$};
    \node at (5,-2.5) {$\vdots$};
    \node at (1,-2.5) {$\vdots$};
    \node at (6.35,0) {$\vdots$};
    \node at (0.15,0) {$\vdots$};

    \node[above] at (3,-2.8) {$a=d$};
    \node[above] at (3,0.2) {$b=c$};
\end{tikzpicture}
}
\caption{Graph $\Gamma / e$.}
\label{fig:flype4_v1}
\end{subfigure}
\hfill 
\begin{subfigure}{0.49\textwidth}
\centering
\resizebox{1.1\textwidth}{!}{
\begin{tikzpicture}
\draw[thick] (0,1) rectangle (1.5,-1);
    \node at (0.6,-0.1) {\rotatebox{180}{T}};

    \draw[thick] (5,1) rectangle (6.5,-1);
    \node at (5.9,-0.1) {R};

    \draw[black, thick]
    (0.15,0.8)  
    .. controls (-3,-3) and (1,-4) .. (3,-3);
    \draw[black, thick]
    (0.15,0.4)  
    .. controls (-2.5,-2.5) and (1,-3.5) .. (3,-3);
    \draw[black, thick]
    (0.15,-0.5)  
    .. controls (-1.2,-1) and (1,-2.5) .. (3,-3);
    \draw[black, thick]
    (3,-3)  
    .. controls (5,-2.5) and (7.6,-1.1) .. (6.35,-0.5);
    \draw[black, thick]
    (3,-3)  
    .. controls (5,-3.5) and (8.5,-2.5) .. (6.35,0.4);
    \draw[black, thick]
    (3,-3)  
    .. controls (5,-4) and (9,-3) .. (6.35,0.8);
    
    \draw[black, thick]
    (0.9,0.8) .. controls (1.2,0.6) .. (3.2,0);
    \draw[black, thick]
    (0.9,0.4) .. controls (1.2,0.2) .. (3.2,0);
    \draw[black, thick]
    (0.9,-0.5) .. controls (1.2,-0.3) .. (3.2,0);

    \draw[black, thick]
    (5.4,0.8) .. controls (5,0.6) .. (3.2,0);
    \draw[black, thick]
    (5.4,0.4) .. controls (5,0.2) .. (3.2,0);
    \draw[black, thick]
    (5.4,-0.5) .. controls (5,-0.3) .. (3.2,0);

    \node[above] at (3,-2.8) {$a=d$};
    \node at (5.4,0) {$\vdots$};
    \node at (1,0) {$\vdots$};
    
    \node at (6.35,0) {$\vdots$};
    \node at (0.15,0) {$\vdots$};
    \node[above] at (3,0.2) {$b=c$};

\end{tikzpicture}
}
\caption{Graph $\Gamma' / e$.}
\label{fig:flype4_v2}
\end{subfigure}
 \caption{Graphs $\Gamma /e$ and $\Gamma' / e$.}
 \label{fig:flype4}
\end{figure}

For the contracted graphs we have 

$$\Gamma / e=(T\sqcup R)/((a,b),(d,c)) \mbox{ and } \Gamma' / e=(\perp \sqcup R)/((a,b),(d,c)),$$

see Figure \ref{fig:flype4}. They are equal, after identifying the vertices and edges of $\perp$ and $T$, modulo inverting the direction of edges by Remark \ref{r-flype-tangle}.
Hence~$\mC_{\Gamma' / e} =(x,y)* \mC_{\Gamma / e}$. 

Let $i$ be the position in the cycle code $\mC_{\Gamma}$ corresponding to the edge $e$ of the graph~$\Gamma$.
Then $\mC_{\Gamma \setminus e}= (\mC_{\Gamma})^i$ and $\mC_{\Gamma / e}= (\mC_{\Gamma})_i$  by Proposition \ref{p-contr-del-punct-short}.
Similarly, $\mC_{\Gamma' \setminus e}= (\mC_{\Gamma'})^i$ and~$\mC_{\Gamma' / e}= (\mC_{\Gamma'})_i$.
Hence $(\mC_{\Gamma})^i= ((x,y)*\mC_{\Gamma'})^i$ and $(\mC_{\Gamma})_i= ((x,y)*\mC_{\Gamma'})_i$
Therefore~$\mC_{\Gamma}= (x,y)*\mC_{\Gamma'}$ by Proposition \ref{p-code-punct-short}. Hence  $\mC_{\Gamma}$ and $\mC_{\Gamma'}$ are ($\pm1$)-permutation equivalent.

The proof works only for one of the two black/white diagrams, 
but Theorem \ref{p-dual-black-white} on duality remedies this. \qedhere
\end{proof}

Theorem \ref{thm-code-alternating} is shown for $t=-1$, and this assumption is used in the proof using the property that the black and white codes are dual to each other. In fact the theorem is not true for arbitrary invertible $t$ as we show in the next example.

\begin{figure}[hbt!]
    \centering
\subcaptionbox{$H_t = \begin{pmatrix}
0 & 0 & t & 1 & t & 1 & 0  \\
0 & 0 & 0 & 0 & 0 & t & 1  \\
1 & 0 & 0 & 0 & 1 & 0 & t  \\
t & t & 0 & 0 & 0 & 0 & 0  \\
0 & 1 & 1 & t & 0 & 0 & 0  
\end{pmatrix}$}
[.48\linewidth]{\includegraphics[width=0.7\textwidth]{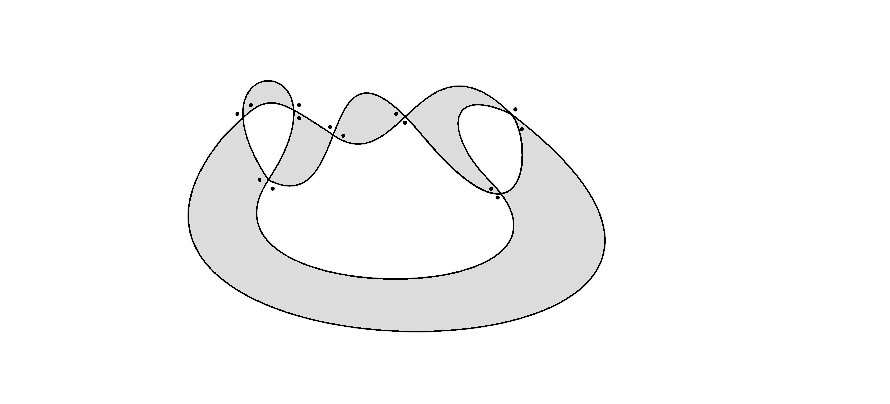}}
 \hspace{.02\textwidth}
\subcaptionbox{$H'_t = \begin{pmatrix}
1 & 0 & t & 1 & 0 & 0 & 0  \\
0 & 0 & 0 & 0 & 0 & t & 1  \\
t & 0 & 0 & 0 & t & 1 & 0   \\
0 & t & 0 & 0 & 1 & 0 & t  \\
0 & 1 & 1 & t & 0 & 0 & 0  
\end{pmatrix}$}
[.48\linewidth]{\includegraphics[width=0.7\textwidth]{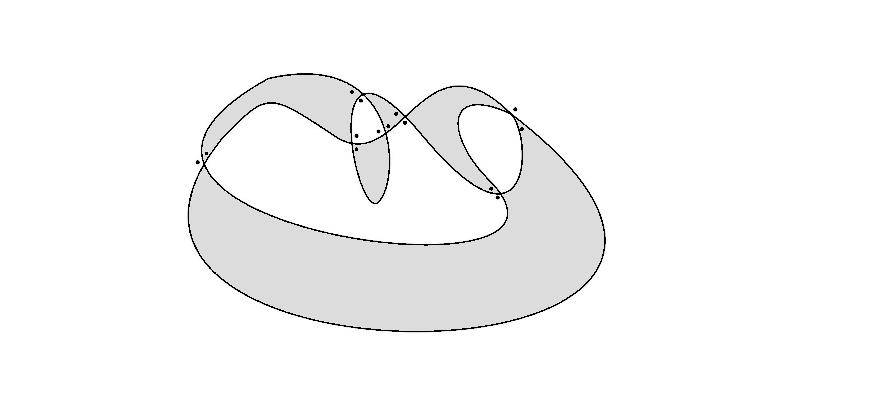}}
\caption{Alternating diagram and a flype of the $7_6$ knot.} \label{fig:flype-6}
\end{figure}


\begin{example}
Consider the alternating Tait diagram $D$ and a flype $D_f$ of the oriented knot~$K=7_6$ as given in Figure \ref{fig:flype-6}. Consider the black graph code of $D$ with $t=2$ over $\F_5$.
Then a generator matrix of the code $\mC_{D,2}$ is given by 
$$
G = \begin{pmatrix}
4 & 1 & 4 & 0 & 1 & 0 & 0\\
4 & 1 & 3 & 3 & 0 & 1 & 3
\end{pmatrix}
$$
and a generator matrix of the code $\mC_{D_f,2}$ is given by  
$$
G_f = \begin{pmatrix}
4 & 2 & 3 & 0 & 1 & 0 & 0\\
2 & 2 & 1 & 1 & 0 & 1 & 4
\end{pmatrix}.
$$
The two codes are not monomial equivalent, since monomial equivalent generator matrices are given by
$$
G' = \begin{pmatrix}
1 & 1 & 1 & 0 & 1 & 0 & 0\\
1 & 1 & 2 & 1 & 0 & 1 & 1
\end{pmatrix}
\ \mbox{ and } \ 
G'_f = \begin{pmatrix}
1 & 1 & 1 & 0 & 1 & 0 & 0\\
3 & 1 & 2 & 1 & 0 & 1 & 1
\end{pmatrix}.
$$
The Alexander polynomial of $K$ is $\Delta_K (t)=t^4-5t^3+7t^2-5t+1$. 
Then $\Delta_K (2)=-5$. So the knot has a nontrivial $(\F_5.2)$-coloring. The corresponding AB-code over $\F_5$ is generated by $(3,2,2,3,1,1,3)$ for the diagram $D$ and by $(1,4,4,1,1,1,3)$ for $D_f$. Hence these codes are not permutation equivalent and not ($\pm1$)-permutation equivalent.

\end{example}

We now present the coding-theoretic invariant we were aiming for.

\begin{theorem} [Theorem \ref{cor-code-alternating}]
The ($\pm1$)-permutation equivalence class of the Alexander-Briggs code $\mA_{D}$ of a reduced Tait diagram $D$ of a prime alternating knot is 
an invariant of the knot.
\end{theorem}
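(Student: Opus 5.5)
The plan is to deduce the statement from Theorem \ref{thm-code-alternating}, using Proposition \ref{p-AB-hull} and Remark \ref{rem:hullofcodes}. Let $D$ and $D'$ be two reduced Tait diagrams of the same prime alternating knot. By Theorem \ref{thm-code-alternating}, the black cycle codes $\mC_D=\mC_{\Gamma_D}$ and $\mC_{D'}=\mC_{\Gamma_{D'}}$ are ($\pm1$)-permutation equivalent.

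Concretely, there will be a permutation matrix $\Pi$ and a vector $x$ with entries $\pm1$ such that $\mC_{D'}=(x*\mC_D)\Pi$. Here coordinates are indexed by crossings, which are the edges of the black graph.

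By Proposition \ref{p-AB-hull} with $t=-1$, we have $\mA_D=\mC_D\cap\mC_D^\perp$, and likewise for $D'$. So it suffices to check that the hull is carried along by the equivalence; this is the content of Remark \ref{rem:hullofcodes}. I will verify it directly. The map $\phi(c)=(x*c)\Pi$ preserves the standard bilinear form, because $x_i^2=1$ and permutations preserve the inner product. Hence $\phi(\mC^\perp)=\phi(\mC)^\perp$, and therefore $\phi(\mC\cap\mC^\perp)=\phi(\mC)\cap\phi(\mC)^\perp$. Applying this to $\mC=\mC_D$ gives $\mA_{D'}=\phi(\mA_D)$, so $\mA_D$ and $\mA_{D'}$ are ($\pm1$)-permutation equivalent.

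The main point needing care is that the invariance must not depend on which checkerboard color is called black. One must also be sure that the hull description of Proposition \ref{p-AB-hull} uses the same coordinate labelling by crossings for both the black and white codes. I will handle the color choice with Theorem \ref{p-dual-black-white}: since $D$ is reduced and alternating, $\mC_{D^*}=\mC_D^\perp$. The hull of $\mC_{D^*}$ is then $\mC_D^\perp\cap\mC_D$, the same code. So $\mA_D=\mC_{D,-1}\cap\mC_{D^*,-1}$ is symmetric in black and white, and the argument works whichever graph the flype analysis in Theorem \ref{thm-code-alternating} applies to.

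Finally, the $S^3$ convention of Remark \ref{rem:whyS3} only matters through Theorem \ref{thm-tait-conj}, which is already absorbed into Theorem \ref{thm-code-alternating}.
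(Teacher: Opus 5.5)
Your proposal is correct and is essentially the paper's proof. You identify $\mA_D$ with the hull of $\mC_{\Gamma_D}$ via Proposition \ref{p-AB-hull}, invoke Theorem \ref{thm-code-alternating} for the cycle code, and carry the hull along the ($\pm1$)-permutation equivalence as in Remark \ref{rem:hullofcodes}; your explicit isometry check (using $x_i^2=1$) and your observation, via Theorem \ref{p-dual-black-white}, that the hull does not depend on which checkerboard colour is called black are details the paper leaves implicit.
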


\begin{proof}
The Alexander-Briggs code $\mA_{D}$ is equal to the hull of the cycle code $\mC_{\Gamma _D}$ by Proposition \ref{p-AB-hull}. 
The ($\pm1$)-permutation equivalence class of the code $\mC_{\Gamma _D}$ is an invariant of the knot by Theorem \ref{thm-code-alternating}.
The hull of a code is invariant under the ($\pm1$)-permutation equivalence by Remark \ref{rem:hullofcodes}.
This proves the theorem.
\end{proof}

It has to be noted that deciding whether two reduced alternating knot diagrams represent the same knot is a polynomial-time problem. This was established recently in \cite{haider2024polynomial}, where they proved that given two alternating diagrams, deciding whether they represent the same link/knot admits a polynomial-time algorithm. It is worth noting that the complexity of the equivalence problem remains unknown for arbitrary links. 

\begin{remark}
\label{rem:extendtolinks}
Our results and techniques also extend to links as the Tait flyping conjecture holds for links. However, for the ease of exposition, in this paper we concentrate on knots.
\end{remark}

We conclude this subsection with a remark discussing role of the primeness assumption in the statement of Tait's flyping conjecture.   

\begin{remark}
\label{removeprimeness}
In the original proof of the Tait’s flyping conjecture in \cite{menasco1991-1} and \cite{menasco1993} it is explicitly assumed that the knots are prime, but in several sources it is stated without the ``prime'' assumption, but also without a corresponding proof, see for instance \cite[Chapter 11 \S 5]{murasugi1996knot} and \cite{menasco2021}. However, there does not seem to be any counterexample in the case of an alternating knot that is not prime. That is, there is no known pair of composite alternating knots that are equivalent but not flype-equivalent. We keep the prime assumptions in Theorem \ref{thm-code-alternating} and Theorem \ref{cor-code-alternating}.

\end{remark}

\section{Comparing the New Invariant}
\label{sec:comparenewinv}

In this section, we compare our invariant with several classical ones. We present examples in which our invariant outperforms some well-known invariants in knot theory.

We start with a remark that is implied by Proposition \ref{p-tutte} and the fact that the extended weight enumerator of a graph code and the Tutte polynomial of its graph determine each other. See \cite[\S 8.2.5]{pellikaan2018} and \cite{jurrius2012, jurrius2013}.

\begin{remark}
The extended weight enumerator of the code of a reduced Tait diagram of a prime alternating knot is an invariant. 
\end{remark}

We continue by noting that the Tutte polynomial determines the Jones polynomial, as proven in \cite[page 232]{kauffman1988a}. The converse is not valid, see \cite{thistlethwaite1987}. Therefore, knots having the same Jones polynomial can have distinct Tutte polynomials, and their graph codes will not be equivalent by Proposition \ref{p-tutte}.

\begin{figure}[htbp]
  \centering

  \begin{subfigure}[b]{0.45\textwidth}
    \centering
    \begin{tikzpicture}[scale=1.4, every node/.style={circle, fill=black, inner sep=1.2pt}]
      \node (A) at (0,2.5) {};   
      \node (B) at (-1,1.9) {};  
      \node (C) at (1,1.9) {};   
      \node (D) at (-0.7,1) {};  
      \node (E) at (0.7,1) {};   
      \node (F) at (-0.7,0) {};  
      \node (G) at (0.7,0) {};   
      \node (H) at (0,1.5) {};   

      \draw (A)--(B)--(D)--(F)--(G)--(E)--(C)--(A);
      \draw (D)--(H)--(E)--(D);
      \draw (A)--(H);
    \end{tikzpicture}
    \caption{$G$}
    \label{fig:G3}
  \end{subfigure}
  \hfill
  \begin{subfigure}[b]{0.45\textwidth}
    \centering
    \begin{tikzpicture}[scale=1.4, every node/.style={circle, fill=black, inner sep=1.2pt}]
      \node (A) at (0,2.5) {};   
      \node (B) at (-1,1.9) {};  
      \node (C) at (1,1.9) {};   
      \node (D) at (-0.7,1) {};  
      \node (E) at (0.7,1) {};   
      \node (F) at (-0.7,0) {};  
      \node (G) at (0.7,0) {};   
      \node (H) at (0,1.5) {};   

      \draw (A)--(B)--(D)--(F)--(G)--(E)--(C)--(A);
      \draw (D)--(H)--(E)--(D);
      \draw (F) .. controls (1.3,-0.5) .. (C);
    \end{tikzpicture}

    \vspace{-0.4cm}

    \caption{$H$}
    \label{fig:G4}
  \end{subfigure}

  \caption{Graphs $G$ and $H$ have the same Tutte polynomial.}
  \label{fig:sametutte}
\end{figure}

Next, we give an example where the Tutte polynomials are the same, but the Alexander polynomial and the codes can still distinguish between them.

\begin{example}
In \cite[Example 3.2]{clancy2015}, two graphs with the Tutte polynomial 
\begin{multline*}
x^7 + 4x^6 + x^5 y + 9x^5 + 6x^4 y + 3x^3 y^2 + x^2 y^3 + 13x^4 + 13x^3 y + 7x^2 y^2 \\
+ 3x y^3 + y^4 + 12x^3 + 15x^2 y + 9x y^2 + 3y^3 + 7x^2 + 9x y + 4y^2 + 2x + 2y
\end{multline*}
are given. We show them in Figure \ref{fig:sametutte}. Since they have the same Tutte polynomial, their Jones polynomial are the same.

Starting from these graphs, we can create alternating knot diagrams by putting crossings in the middle of each edge of the graph (alternatively, you can think of the graph vertices as being in the middle of the regions of the corresponding knot diagram we construct). By using Dowker-Thistlethwaite (DT) notation, we can identify these knots in \cite{knotinfo}. Note that Dowker-Thistlethwaite codes for prime knots determine the knot up to its mirror image \cite{dowker1983classification}. The DT notations of our diagrams are
$$16\ 10\ 22\ 20\ 2\ 14\ 18\ 8\ 4\ 12\ 6 \mbox{, and}$$
$$-14\ -18\ -20\ -12\ -22\ -16\ -2\ -8\ -6\ -4\ -10 \mbox{, respectively.}$$
Again using \cite{knotinfo}, we can determine the knots as $11a_{317}$ and $11a_{67}$, respectively, following the Hoste-Thistlethwaite-Weeks (HTW) notation \cite{hoste1998first}. One can now double-check that the black diagram (see Definition \ref{def:signedgraph}) of $11a_{317}$ gives $G$ of Figure \ref{fig:G3} and the black diagram of $11a_{67}$ gives $H$ of Figure \ref{fig:G4}. 

We can already note at this stage that their Alexander polynomials are distinct, but their determinants are both 125. Furthermore, the second elementary ideals (see Definition \ref{d-elem-id}) are not the same: $(5,t+1)$ and $(1)$, respectively.
So the AB codes, which are the hulls of the graph codes, have dimension 2 and 1, respectively, over $\F_5$. This implies that their corresponding graph codes over $\F_5$ are not equivalent since if two links/knots have the same graph code, then also the same hull (but not necessarily vice versa, as the hull is trivial, in general, unless the characteristic of the field divides the determinant). 
\end{example}

We conclude this section with an example of two alternating links (rather than knots, following Remark \ref{rem:extendtolinks}) with the same Tutte, Alexander and Jones polynomial, where our invariant distinguishes them.

\begin{figure}[htbp]
  \centering

  \begin{subfigure}[b]{0.45\textwidth}
    \centering
    \begin{tikzpicture}[scale=2, every node/.style={circle, fill=black, inner sep=1.2pt}]
      
      \node (A1) at (-0.75,1) {};
      \node (A2) at (0,1) {};
      \node (A3) at (0.75,1) {};
      \node (A4) at (-1.25,0.5) {};
      \node (A5) at (-0.5,0.5) {};
      \node (A6) at (0.5,0.5) {};
      \node (A7) at (1.25,0.5) {};
      \node (A8) at (0,0) {};
      \node (A9) at (-0.5,-0.5) {};
      \node (A10) at (0.5,-0.5) {};
      \node (A11) at (0,-1) {};
      
\draw[dotted,thick] (A1)--(A2)--(A3)--(A7)--(A6)--(A5)--(A4)--
(A9)--(A8)--(A10)--(A11);
\draw[dotted,thick] (A1)--(A4);
\draw[dotted,thick] (A2)--(A5);
\draw[dotted,thick] (A2)--(A6);
\draw[dotted,thick] (A5)--(A8);
\draw[dotted,thick] (A6)--(A8);
\draw[dotted,thick] (A7)--(A10);
\draw[dotted,thick] (A9)--(A11);
\draw[dotted,thick, bend left=60] (A1) to (A3);
\draw[dotted,thick, bend right=60] (A1) to (A4);
\draw[dotted,thick, bend left=60] (A3) to (A7);
\draw[dotted,thick, bend right=60] (A9) to (A11);
\draw[dotted,thick, bend left=60] (A10) to (A11);
      
    \end{tikzpicture}
    \caption{$M(G)$}
    \label{fig:MG}
  \end{subfigure}
  \hfill
  \begin{subfigure}[b]{0.45\textwidth}
    \centering
     \begin{tikzpicture}[scale=2, every node/.style={circle, fill=black, inner sep=1.2pt}]
      \node (A1) at (-0.5,1) {};
      \node (A2) at (0.5,1) {};
      \node (A11) at (1.25,-1.25) {};
      \node (A3) at (-1,0.5) {};
      \node (A4) at (-0.5,0.5) {};
      \node (A5) at (0.5,0.5) {};
      \node (A6) at (1.25,0.5) {};
      \node (A7) at (0,0) {};
      \node (A8) at (-0.5,-0.5) {};
      \node (A9) at (0.5,-0.5) {};
      \node (A10) at (0,-1) {};
      
\draw[dotted,thick] (A3)--(A1)--(A2)--(A6)--(A11)--(A10)--(A8)--
(A7)--(A5)--(A4)--(A3);
\draw[dotted,thick] (A4)--(A7);
\draw[dotted,thick] (A5)--(A6);
\draw[dotted,thick] (A7)--(A9);
\draw[dotted,thick] (A9)--(A10);
\draw[dotted,thick] (A3)--(A8);
\draw[dotted,thick] (A6)--(A9);
\draw[dotted,thick, bend left=60] (A1) to (A2);
\draw[dotted,thick, bend right=60] (A1) to (A3);
\draw[dotted,thick, bend left=60] (A4) to (A5);
\draw[dotted,thick, bend right=60] (A9) to (A10);
\draw[dotted,thick, bend right=60] (A8) to (A11);

\draw[dotted,thick] (A2) .. controls (2,1) .. (A11);
    \end{tikzpicture}
   
    \vspace{-0.4cm}

    \caption{$M(H)$}
    \label{fig:MH}
  \end{subfigure}

  \caption{Medial graphs of $G$ and $H$ in Figure \ref{fig:sametutte}.}
  \label{fig:medialgraphs}
\end{figure}

\begin{example}
Consider the graphs $G$ and $H$ of Figure \ref{fig:sametutte}. These graphs have the same Tutte polynomial. We now follow Jaeger's method in \cite{jaeger1988tutte} which would give two alternating links $\mA(G)$ and $\mA(H)$ whose HOMFLY polynomials \cite{lickorish1987polynomial} are the same. Recall that the HOMFLY polynomial generalizes both Alexander and Jones polynomial. The first step in Jaeger's method is to construct the medial graphs of $G$ and $H$, see Figure \ref{fig:medialgraphs}. Note that the constructed medial graphs are 4-regular. The next step is to checkerboard color the regions and give a direction to every edge such that the incident black region always stays on the left, see Figure \ref{fig:orientedMGMH}. Observe that for every vertex, there are 2 incoming edges and 2 outgoing edges. See Figure \ref{fig:orientedMGMH}. Now the key step is to replace the neigbourhood of each vertex of the medial graph by a diagram of the type depicted in \cite[Figure 3, page 649]{jaeger1988tutte}. Very informally speaking, each edge is replaced by a pair of crossing in a way that the direction of the edges given in Figure \ref{fig:orientedMGMH} is kept the same. This method produces an oriented link diagram whose number of crossing is equal to twice the number of edges of the original graph that we initially start with. 
\end{example}

\begin{figure}[htbp]
  \centering

  \begin{subfigure}[b]{0.45\textwidth}
    \centering
    \begin{tikzpicture}[scale=2, every node/.style={circle, fill=black, inner sep=1.2pt}]
      \coordinate (A1) at (-0.75,1) {};
      \coordinate (A2) at (0,1) {};
      \coordinate (A3) at (0.75,1) {};
      \coordinate (A4) at (-1.25,0.5) {};
      \coordinate (A5) at (-0.5,0.5) {};
      \coordinate (A6) at (0.5,0.5) {};
      \coordinate (A7) at (1.25,0.5) {};
      \coordinate (A8) at (0,0) {};
      \coordinate (A9) at (-0.5,-0.5) {};
      \coordinate (A10) at (0.5,-0.5) {};
      \coordinate (A11) at (0,-1) {};
      


\fill[gray!30] 
        (A9) -- (A8) -- (A5) -- (A4) -- (A9);
\fill[gray!30] 
        (A6) -- (A8) -- (A10) -- (A7) -- (A6);
\fill[gray!30] 
        (A2) -- (A5) -- (A6) -- (A2);

\fill[gray!30] (A1) -- (A3) to[bend right=60] (A1) ;
\fill[gray!30] (A4) -- (A1) to[bend right=60] (A4) ;
\fill[gray!30] (A3) -- (A7) to[bend right=60] (A3) ;
\fill[gray!30] (A11) -- (A9) to[bend right=60] (A11) ;
\fill[gray!30] (A10) -- (A11) to[bend right=60] (A10) ;

\draw[dotted,thick, postaction={decorate},
      decoration={markings, mark=at position 0.5 with {\arrow{>}}}] (A1)--(A2);

\draw[dotted,thick, postaction={decorate},
      decoration={markings, mark=at position 0.5 with {\arrow{>}}}] (A2)--(A3);

\draw[dotted,thick, postaction={decorate},
      decoration={markings, mark=at position 0.5 with {\arrow{>}}}] (A3)--(A7);

\draw[dotted,thick, postaction={decorate},
      decoration={markings, mark=at position 0.5 with {\arrow{>}}}] (A7)--(A6);

\draw[dotted,thick, postaction={decorate},
      decoration={markings, mark=at position 0.5 with {\arrow{>}}}] (A5)--(A6);

\draw[dotted,thick, postaction={decorate},
      decoration={markings, mark=at position 0.5 with {\arrow{>}}}] (A5)--(A4);

\draw[dotted,thick, postaction={decorate},
      decoration={markings, mark=at position 0.5 with {\arrow{>}}}] (A4)--(A9);

\draw[dotted,thick, postaction={decorate},
      decoration={markings, mark=at position 0.5 with {\arrow{>}}}] (A9)--(A8);

\draw[dotted,thick, postaction={decorate},
      decoration={markings, mark=at position 0.5 with {\arrow{>}}}] (A8)--(A10);

\draw[dotted,thick, postaction={decorate},
      decoration={markings, mark=at position 0.5 with {\arrow{>}}}] (A10)--(A11);

\draw[dotted,thick, postaction={decorate},
      decoration={markings, mark=at position 0.5 with {\arrow{>}}}] (A4)--(A1);
\draw[dotted,thick, postaction={decorate},
      decoration={markings, mark=at position 0.5 with {\arrow{>}}}] (A2)--(A5);
\draw[dotted,thick, postaction={decorate},
      decoration={markings, mark=at position 0.5 with {\arrow{>}}}] (A6)--(A2);
\draw[dotted,thick, postaction={decorate},
      decoration={markings, mark=at position 0.5 with {\arrow{>}}}] (A8)--(A5);
\draw[dotted,thick, postaction={decorate},
      decoration={markings, mark=at position 0.5 with {\arrow{>}}}] (A6)--(A8);
\draw[dotted,thick, postaction={decorate},
      decoration={markings, mark=at position 0.5 with {\arrow{>}}}] (A10)--(A7);
\draw[dotted,thick, postaction={decorate},
      decoration={markings, mark=at position 0.5 with {\arrow{>}}}] (A11)--(A9);
\draw[dotted,thick, bend right=60, postaction={decorate},
      decoration={markings, mark=at position 0.5 with {\arrow{>}}}] (A3) to (A1);
\draw[dotted,thick, bend right=60, postaction={decorate},
      decoration={markings, mark=at position 0.5 with {\arrow{>}}}] (A1) to (A4);
\draw[dotted,thick, bend right=60, postaction={decorate},
      decoration={markings, mark=at position 0.5 with {\arrow{>}}}] (A7) to (A3);
\draw[dotted,thick, bend right=60, postaction={decorate},
      decoration={markings, mark=at position 0.5 with {\arrow{>}}}] (A9) to (A11);
\draw[dotted,thick, bend right=60, postaction={decorate},
      decoration={markings, mark=at position 0.5 with {\arrow{>}}}] (A11) to (A10);

\foreach \v in {A1,A2,A3,A4,A5,A6,A7,A8,A9,A10,A11}
  \fill[black] (\v) circle (0.75pt);

    \end{tikzpicture}
    \caption{Checkerboard colored, oriented $M(G)$}
    \label{fig:orientedMG}
  \end{subfigure}
  \hfill
  \begin{subfigure}[b]{0.45\textwidth}
    \centering
     \begin{tikzpicture}[scale=2, every node/.style={circle, fill=black, inner sep=1.2pt}]
      \coordinate (A1) at (-0.5,1) {};
      \coordinate (A2) at (0.5,1) {};
      \coordinate (A11) at (1.25,-1.25) {};
      \coordinate (A3) at (-1,0.5) {};
      \coordinate (A4) at (-0.5,0.5) {};
      \coordinate (A5) at (0.5,0.5) {};
      \coordinate (A6) at (1.25,0.5) {};
      \coordinate (A7) at (0,0) {};
      \coordinate (A8) at (-0.5,-0.5) {};
      \coordinate (A9) at (0.5,-0.5) {};
      \coordinate (A10) at (0,-1) {};

\fill[gray!30] 
        (A8) -- (A7) -- (A4) -- (A3) -- (A8);
\fill[gray!30] 
        (A9) -- (A6) -- (A5) -- (A7) -- (A9);
        
\fill[gray!30] (A3) -- (A1) to[bend right=60] (A3) ;
\fill[gray!30] (A1) -- (A2) to[bend right=60] (A1) ;
\fill[gray!30] (A4) -- (A5) to[bend right=60] (A4) ;
\fill[gray!30] (A10) -- (A9) to[bend right=60] (A10) ;
\fill[gray!30] (A11) -- (A10) -- (A8) to[bend right=60] (A11) ;

\fill[gray!30] 
  (A2) -- (A6) -- (A11) 
  .. controls (2,1) .. (A2);


\draw[dotted,thick, postaction={decorate},
      decoration={markings, mark=at position 0.5 with {\arrow{>}}}] (A3)--(A1);
\draw[dotted,thick, postaction={decorate},
      decoration={markings, mark=at position 0.5 with {\arrow{>}}}] (A1)--(A2);
\draw[dotted,thick, postaction={decorate},
      decoration={markings, mark=at position 0.5 with {\arrow{>}}}] (A2)--(A6);
\draw[dotted,thick, postaction={decorate},
      decoration={markings, mark=at position 0.5 with {\arrow{>}}}] (A6)--(A11);
\draw[dotted,thick, postaction={decorate},
      decoration={markings, mark=at position 0.5 with {\arrow{>}}}] (A11)--(A10);
\draw[dotted,thick, postaction={decorate},
      decoration={markings, mark=at position 0.5 with {\arrow{>}}}] (A10)--(A8);
\draw[dotted,thick, postaction={decorate},
      decoration={markings, mark=at position 0.5 with {\arrow{>}}}] (A8)--(A7);
\draw[dotted,thick, postaction={decorate},
      decoration={markings, mark=at position 0.5 with {\arrow{>}}}] (A5)--(A7);
\draw[dotted,thick, postaction={decorate},
      decoration={markings, mark=at position 0.5 with {\arrow{>}}}] (A4)--(A5);
\draw[dotted,thick, postaction={decorate},
      decoration={markings, mark=at position 0.5 with {\arrow{>}}}] (A4)--(A3);

\draw[dotted,thick, postaction={decorate},
      decoration={markings, mark=at position 0.5 with {\arrow{>}}}] (A7)--(A4);
\draw[dotted,thick, postaction={decorate},
      decoration={markings, mark=at position 0.5 with {\arrow{>}}}] (A6)--(A5);
\draw[dotted,thick, postaction={decorate},
      decoration={markings, mark=at position 0.5 with {\arrow{>}}}] (A7)--(A9);
\draw[dotted,thick, postaction={decorate},
      decoration={markings, mark=at position 0.5 with {\arrow{>}}}] (A10)--(A9);
\draw[dotted,thick, postaction={decorate},
      decoration={markings, mark=at position 0.5 with {\arrow{>}}}] (A3)--(A8);
\draw[dotted,thick, postaction={decorate},
      decoration={markings, mark=at position 0.5 with {\arrow{>}}}] (A9)--(A6);
\draw[dotted,thick, postaction={decorate},
      decoration={markings, mark=at position 0.5 with {\arrow{>}}}, bend right=60] (A2) to (A1);
\draw[dotted,thick, bend right=60, postaction={decorate},
      decoration={markings, mark=at position 0.5 with {\arrow{>}}}] (A1) to (A3);
\draw[dotted,thick, bend right=60, postaction={decorate},
      decoration={markings, mark=at position 0.5 with {\arrow{>}}}] (A5) to (A4);
\draw[dotted,thick, bend right=60, postaction={decorate},
      decoration={markings, mark=at position 0.5 with {\arrow{>}}}] (A9) to (A10);
\draw[dotted,thick, bend right=60, postaction={decorate},
      decoration={markings, mark=at position 0.5 with {\arrow{>}}}] (A8) to (A11);

\draw[dotted,thick, postaction={decorate},
      decoration={markings, mark=at position 0.5 with {\arrow{>}}}] (A11) .. controls (2,1) .. (A2);

\foreach \v in {A1,A2,A3,A4,A5,A6,A7,A8,A9,A10,A11}
  \fill[black] (\v) circle (0.75pt);

    \end{tikzpicture}
    
    \vspace{-0.4cm}

    \caption{Checkerboard colored, oriented $M(H)$}
    \label{fig:orientedMH}
  \end{subfigure}

  \caption{Checkerboard colored medial graphs}
  \label{fig:orientedMGMH}
\end{figure}

Consider the black graph codes $\mC_{M(G)}$ and $\mC_{M(H)}$ defined over $\F_3$, of the link diagrams obtained by following Jaeger's method in Figure \ref{fig:orientedMGMH}. Then a generator matrix of the code~$\mC_{M(G)}$ is given by

\[
\resizebox{\textwidth}{!}{$
\left[
\begin{tabular}{*{22}{c}} 
0 & 0 & -1 & 1 & 1 & -1 & 1 & -1 & 1 & 1 & 0 & 0 & 0 & 0 & 0 & 0 & 0 & 0 & 0 & 0 & 0 & 0 \\
1 & -1 & 1 & -1 & 0 & 0 & 0 & 0 & 0 & 0 & 1 & 1 & -1 & 1 & 0 & 0 & 0 & 0 & 0 & 0 & 0 & 0 \\
0 & 0 & -1 & 1 & 1 & -1 & 1 & -1 & 0 & 0 & -1 & -1 & 0 & 0 & -1 & 1 & -1 & 1 & -1 & 1 & 0 & 0 \\
0 & 0 & 1 & -1 & -1 & 1 & -1 & 1 & 0 & 0 & 1 & 1 & 0 & 0 & 0 & 0 & 0 & 0 & 0 & 0 & 1 & 1
\end{tabular}
\right]
$}
\]

and a generator matrix of the code $\mC_{M(H)}$ is given by 

\[
\resizebox{\textwidth}{!}{$
\left[
\begin{tabular}{*{22}{c}} 
0 & 0 & -1 & 1 & -1 & 1 & -1 & 1 & -1 & 1 & 0 & 0 & 0 & 0 & 0 & 0 & 0 & 0 & 0 & 0 & 0 & 0 \\
0 & 0 & 1 & -1 & 1 & -1 & 1 & -1 & 0 & 0 & -1 & 1 & -1 & 1 & 0 & 0 & 0 & 0 & 0 & 0 & 0 & 0 \\
0 & 0 & 0 & 0 & -1 & 1 & -1 & 1 & 0 & 0 & 0 & 0 & 0 & 0 & -1 & 1 & -1 & 1 & 0 & 0 & 0 & 0 \\
-1 & 1 & -1 & 1 & -1 & 1 & -1 & 1 & 0 & 0 & 0 & 0 & 0 & 0 & -1 & 1 & 0 & 0 & -1 & 1 & -1 & 1
\end{tabular}
\right]
$}.
\]

The two codes are not monomial equivalent, since their automorphism groups have different orders, which was verified by Magma \cite{bosma1997magma}. The reason monomially equivalent codes must have isomorphic automorphism groups is explained in the following remark.

\begin{remark}    
The automorphism group of a code is the set of all invertible monomial maps that send the code to itself. If two codes $\mC_1$ and $\mC_2$ are monomially equivalent, then there must exist an invertible monomial map $f$ such that $f(\mC_2)=\mC_1$. Now, let $g \in \Aut(\mC_1)$. We then have 
$$f^{-1}gf(\mC_2)= f^{-1}g(\mC_1)=f^{-1}(\mC_1) = \mC_2,$$ implying that $f^{-1}gf \in \Aut(\mC_2)$. So, conjugation by $f$ from $\Aut(\mC_1)$ to $\Aut(\mC_2)$ defines an isomorphism, and therefore these two groups must have the same order. 
\end{remark}

Since the codes are not monomial equivalent, that means they are not ($\pm1$)-permutation equivalent, showing that our invariant can distinguish between these links.

\section{Mutant Knots}
\label{sec:mutant}

In this section we take a closer look at knots that are related by an operation called mutation. They deserve special attention as the knots (the resulting mutants) share many topological properties, as well as sharing many invariants.

\begin{figure}[htbp]
    \centering
\begin{tikzpicture}

\fill[gray!30]
        (1.5,1.5) 
        .. controls (2,2) and (2.5,2) .. 
        (3,1.5)
        -- (4.5,1.5) 
        .. controls (3.5,3) and (1,3) .. 
        (0,1.5) -- cycle;

    \fill[gray!30]
        (1.5,0) 
        .. controls (2,-0.5) and (2.5,-0.5) .. 
        (3,0) 
        -- (4.5,0) 
        .. controls (3.5,-1.5) and (1,-1.5) .. 
        (0,0) -- cycle;
    
    \draw[thick] (0,1.5) rectangle (1.5,0);
    \node at (0.75,0.75) {$F$};

    \draw[thick] (3,1.5) rectangle (4.5,0);
    \node at (3.75,0.75) {$G$};

    \draw[thick]
        (1.5,1.5) 
        .. controls (2,2) and (2.5,2) .. 
        (3,1.5); 
    \draw[thick]
        (0,1.5) 
        .. controls (1,3) and (3.5,3) .. 
        (4.5,1.5);

    \draw[thick]
        (1.5,0) 
        .. controls (2,-0.5) and (2.5,-0.5) .. 
        (3,0); 
    \draw[thick]
        (0,0) 
        .. controls (1,-1.5) and (3.5,-1.5) .. 
        (4.5,0);

    \draw[->, thick] (5,0.75) -- (6,0.75);

    \fill[gray!30]
        (8.5,1.5) 
        .. controls (9,2) and (9.5,2) ..
        (10,1.5) 
        -- (11.5,1.5) 
        .. controls (10.5,3) and (8,3) ..
        (7,1.5) -- cycle;
    \fill[gray!30]
        (8.5,0) 
        .. controls (9,-0.5) and (9.5,-0.5) .. 
        (10,0) 
        -- (11.5,0) 
        .. controls (10.5,-1.5) and (8,-1.5) ..
        (7,0) -- cycle;

    \draw[thick] (7,1.5) rectangle (8.5,0);
    \node at (7.75,0.75) {$\rho(F)$};

    \draw[thick] (10,1.5) rectangle (11.5,0);
    \node at (10.75,0.75) {$G$};

    \draw[thick]
        (8.5,1.5) 
        .. controls (9,2) and (9.5,2) .. 
        (10,1.5); 
    \draw[thick]
        (7,1.5) 
        .. controls (8,3) and (10.5,3) .. 
        (11.5,1.5);

    \draw[thick]
        (8.5,0) 
        .. controls (9,-0.5) and (9.5,-0.5) .. 
        (10,0); 
    \draw[thick]
        (7,0) 
        .. controls (8,-1.5) and (10.5,-1.5) ..
        (11.5,0); 
    \node[above] at (5.5,0.75) {$\rho$};

\end{tikzpicture}

    \caption{Diagrams $D$ (on the left) and $D_\rho$ (on the right).}
    \label{fig:mutation1}
\end{figure}

Two knots are \textbf{mutants} if one can be obtained from the other by cutting out a tangle (see Definition \ref{def:ntangle}) with four boundary points and ``regluing'' it after a half-turn (180 degrees) around one of the three possible axes (see Figure \ref{fig:mutation_rotation}) in a way that preserves the connections to the rest of the knot at those four boundary points. See Figure \ref{fig:mutation1} for the general picture and see \cite[Figure 3.3, page 29]{lickorish1997} for a nice illustrative example.

\begin{figure}[htbp]
    \centering
    
    \begin{subfigure}{0.3\textwidth}
        \centering
        \begin{tikzpicture}
            \draw[thick] (0,1.5) rectangle (1.5,0);
            \node at (0.75,0.75) {F};
        \draw[thick, ->]
        (1.2,0.75) 
        .. controls (1.2,1.2) and (0.3,1.2) .. 
        (0.3,0.75); 
        \end{tikzpicture}
        \caption{The rotation $\rho_x$.}
        \label{fig:rotation_x}
    \end{subfigure}
    \hfill 
    \begin{subfigure}{0.3\textwidth}
        \centering
        \begin{tikzpicture}
            
            \draw[thick] (0,1.5) rectangle (1.5,0);
            \node at (0.75,0.75) {F};

            \draw[thick] (1.5,0.75) -- (2.6,0.75);

            \draw[thick, ->] (2.6,0.3) .. controls (2.8,1.2) .. (2.35,1);
            
            \draw[thick] (2.8,0.75) -- (3.9,0.75);
        \end{tikzpicture}
        \caption{The rotation $\rho_y$.}
        \label{fig:rotation_y}
    \end{subfigure}
    \hfill 
    \begin{subfigure}{0.3\textwidth}
        \centering

\begin{tikzpicture}
            
            \draw[thick] (0,1.5) rectangle (1.5,0);
            \node at (0.75,0.75) {F};

            \draw[thick] (0.75,0) -- (0.75,-1.1);

            \draw[thick, ->] (0.3,-1.1) .. controls (1.2,-1.3) .. (1,-0.5);

            \draw[thick] (0.75,-1.3) -- (0.75,-2.5);
        \end{tikzpicture}

        \caption{The rotation $\rho_z$.}
        \label{fig:rotation_z}
    \end{subfigure}
    \caption{The three rotations $\rho_x$, $\rho_y$ and $\rho_z$.}
    \label{fig:mutation_rotation}
\end{figure}

The rotation can be of three different types, corresponding to different orientations, resulting in three mutants. Note that any mutation of a knot is again a knot \cite[Exercise 2.24]{adams1994} and any mutation of an alternating knot is again an alternating knot \cite[Exercise 2.23]{adams1994}, but these three mutants may not be necessarily distinct, that is they may be equivalent knots, see \cite[Figure 2.30 and Figure 2.31]{adams1994}. We denote by $\rho_x$, $\rho_y$, $\rho_z$ the half turn rotations around the $x$-axis, $y$-axis and $z$-axis, respectively. Their products commute, and moreover the following holds:
$$
\rho_x \rho_y = \rho_y \rho_x = \rho_z, \ \ \rho_x \rho_z = \rho_z \rho_x = \rho_y, \ \ \rho_y \rho_z = \rho_z \rho_y = \rho_x.
$$
Furthermore, $\rho_x^2= \rho_y^2=\rho_z^2=1 $ by definition since all three rotations are half turns. As a result, they generate the Klein 4-group
$<\rho_x , \rho_y ,\rho_z>= \{ 1, \rho_x , \rho_y ,\rho_z \}$.

It was shown in \cite[Theorem 3]{lickorish1987b} that
mutants share HOMFLY and Kauffman polynomials, and hence the same Alexander and Jones polynomial. Therefore, new invariants are often considered in their success to distinguish mutant knots \cite{bishler2021}.
The difficulty to distinguish mutant knots is used in a Post Quantum Cryptography scheme \cite{marzuoli2011}. We refer to \cite{friedl2007} for a table is given that lists all mutant pairs of knots with $11$ crossings.

Regarding our new invariant, we conclude the section by showing that the mutants have equivalent codes.

\begin{theorem}\label{thm-mutation}
Let $D$ be a Tait diagram of an alternating knot $K$, and $D_{\rho}$ be the Tait diagram of the $\rho$-mutation of $K$. 
Then the graph codes of $D$ and $D_{\rho }$ are ($\pm1$)-permutation equivalent.  
\end{theorem}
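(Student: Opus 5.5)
We would mimic the proof of Theorem \ref{thm-code-alternating}, but now the edge set of the black graph splits into the edges coming from crossings inside the tangle $F$ and those coming from crossings of $G$. First we fix the checkerboard colouring as in Figure \ref{fig:mutation1}. The black regions meeting the boundary circle of the 4-tangle are then exactly two regions: the upper one, call it $u$, and the lower one, call it $\ell$. Every other black region lies entirely inside $F$ or inside $G$.

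Hence $\Gamma_D$ is obtained by gluing two graphs. The first, $\Gamma_F$, has the internal black regions of $F$ together with the two boundary nodes $u,\ell$. The second, $\Gamma_G$, is built the same way from $G$. The gluing identifies the boundary nodes: $\Gamma_D=(\Gamma_F\sqcup\Gamma_G)/((u_F,u_G),(\ell_F,\ell_G))$.

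Under a rotation $\rho\in\{\rho_x,\rho_y,\rho_z\}$ the tangle $\rho(F)$ has a black graph isomorphic to $\Gamma_F$ as an undirected planar graph. The only point to check is how the two boundary nodes are matched.
\begin{itemize}
\item For a rotation that preserves the colouring near the boundary, $u$ and $\ell$ may be interchanged by $\rho$.
\item For $\rho_z$ (or whichever rotation swaps black and white near the boundary), the black boundary regions are sent to white ones. This is the delicate case.
\end{itemize}
In the first situation $\Gamma_{D_\rho}$ is $\Gamma_F$ glued to $\Gamma_G$ along the same two nodes, possibly with $u\leftrightarrow\ell$ swapped. Because the gluing of $\Gamma_F$ onto $\Gamma_G$ happens at exactly two nodes, swapping which node is called $u$ and which is $\ell$ yields an isomorphic undirected graph. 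Both graphs have the same cycles, since a cycle through both parts crosses from $u$ to $\ell$ in each part. Directions of edges may be reversed, exactly as in Remark \ref{r-flype-tangle}. The cycle code over $\F_q$ with $t=-1$ only sees directions through signs, so reversing an edge multiplies that coordinate by $-1$. We therefore get $\mC_{\Gamma_{D_\rho}}=(x,y)*\mC_{\Gamma_D}$ with $x\in\{\pm1\}^{|E(\Gamma_F)|}$ and $y$ the all-ones vector.

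To make this rigorous without appealing to pictures, we would use Proposition \ref{p-code-punct-short} in the same spirit as before, or argue directly. A vector lies in the cycle code iff it satisfies the conservation law $A(-1)c=0$ at every vertex. Vertices internal to $F$ or $G$ give identical equations in $D$ and $D_\rho$, up to the sign changes from reversed edges. The equation at the node $u$ in $D$ is the sum of its $F$-part and its $G$-part. In $D_\rho$, if $u,\ell$ are swapped, the $F$-part of the $u$-equation becomes the $F$-part of the old $\ell$-equation.

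This is handled by the inner vertices of $F$. Summing all vertex equations of $\Gamma_F$ gives zero, since each edge contributes $+1$ and $-1$. So the $F$-part at $u$ equals minus the $F$-part at $\ell$ modulo the internal equations. The flux of a cycle-code vector from $F$ into $u$ is therefore the negative of its flux into $\ell$, and swapping the two boundary nodes amounts to multiplying the $F$-coordinates by $-1$. This is again a $\pm1$ rescaling, so the parity check systems define $(\pm1)$-permutation equivalent codes.

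The main obstacle is the rotation that exchanges black and white boundary regions. There we would pass to the white graph. The black graph of $D_\rho$ consists of $\Gamma_G$ together with the white-graph part of $F$, i.e.\ the planar dual piece. To handle this we would use Theorem \ref{p-dual-black-white} ($\mC_D^\perp=\mC_{D^*}$) on the $F$-part, or choose the axis labelling so that every mutation is a composite of colour-preserving ones. We expect that, since alternating mutation preserves alternation and the mutation axes can be chosen so that the checkerboard pattern at the four endpoints is preserved, it suffices to treat the colour-preserving rotations. Verifying this reduction is the step we would check most carefully.
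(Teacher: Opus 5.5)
Your core argument is correct, but two of its statements are wrong, and the first one makes you flag an obstacle that does not exist.

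\textbf{The ``delicate case'' never occurs.} No rotation ``swaps black and white near the boundary''. The three half-turns act on the four endpoints as the three fixed-point-free involutions. On the box they are the in-plane $180^\circ$ rotation and the two reflections in its horizontal and vertical mid-lines. Each of these maps the pair of opposite boundary arcs $\{$top, bottom$\}$ to itself. So the colours of the four regions meeting the box are always preserved. The two black boundary nodes are either fixed (the axis passes through them) or interchanged (the other two rotations). Your last paragraph, including the appeal to the white graph, can simply be deleted.

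\textbf{The swapped regluing is not an isomorphism.} Regluing with $u\leftrightarrow\ell$ interchanged does \emph{not} in general give an isomorphic graph. It is a Whitney twist, which preserves the cycle matroid but typically changes the degree sequence: the gluing nodes get degrees $\deg_F u+\deg_G\ell$ and $\deg_F\ell+\deg_G u$. What actually proves the claim is your conservation-law computation. Let $\phi_F$ be the $F$-part of the equation at $u$. The internal equations of $F$ force the $F$-part at $\ell$ to be $-\phi_F$, and likewise for $\phi_G$. The graph $\Gamma_D$ imposes $\phi_F+\phi_G=0$, the twisted graph imposes $\phi_F-\phi_G=0$, and $c_F\mapsto -c_F$ interchanges the two. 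Combined with the signs from reversed edges, this gives $\mC_{\Gamma_{D_\rho}}=(x,y)*\mC_{\Gamma_D}$. With these two corrections your proof is complete.

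\textbf{Comparison with the paper.} The paper never meets the twist.
\begin{itemize}
\item For the rotation whose axis passes through the top and bottom regions, it takes those regions black. This is allowed because $\mC_D^\perp=\mC_{D^*}$ by Theorem \ref{p-dual-black-white}, and $(\pm1)$-permutation equivalence commutes with taking duals. The two boundary nodes are then fixed, and $\Gamma_{D_\rho}$ is literally isomorphic to $\Gamma_D$ up to edge directions (Remark \ref{r-flype-tangle}).
\item The rotation about the other in-plane axis is declared analogous. It is, after interchanging the colour classes, since that axis passes through the left and right regions.
\item The in-plane rotation is obtained as the composite $\rho_x=\rho_y\rho_z$.
\end{itemize}
Your route stays in one graph and treats all three rotations uniformly, at the cost of the explicit flux computation for the swapped case. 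In return it needs neither the composition step nor duality. It therefore does not use the reduced/alternating hypotheses behind Theorem \ref{p-dual-black-white}, since the cycle code ignores edge signs. It also makes explicit why the paper's word ``isomorphic'' is valid only after choosing the colour adapted to the axis.
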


\begin{proof}
We give the proof for the $\rho_z$ mutation. 
We may assume without loss of generality that the top and bottom regions in Figure \ref{fig:mutation1} are black,
since the codes of the black and white graph are dual to each other by Theorem \ref{p-dual-black-white}. \\
The black graphs $\Gamma_D$ and $\Gamma_{D_{\rho_z}}$ in Figure \ref{fig:mutation2_part1} are isomorphic, 
that is they are the same up to possible changes of the direction of the arrows in the graphs $\Gamma_F$ and $\rho_z(\Gamma_F)$, since they
are obtained by rotating over $180$ degree as explained in Remark \ref{r-flype-tangle}. 
So, the graph codes of $D$ and $D_{\rho_z }$ are ($\pm1$)-permutation equivalent.   \\ 
The proof for the $\rho_y$ mutation is analogous, 
and the $\rho_x$ mutation is equal to combination of the the $\rho_y$ and the $\rho_z$ mutations, since the $\rho_x= \rho_y \rho_z$.
Hence the graph codes of $D$ and $D_{\rho }$ are ($\pm1$)-permutation equivalent for all mutations $\rho$.  
\end{proof}


\begin{figure}[htbp]
\centering

\begin{tikzpicture}

    \draw[thick] (0,1.5) rectangle (1.5,0);
    \node at (0.75,0.75) {$\Gamma_F$};

    \draw[thick] (3,1.5) rectangle (4.5,0);
    \node at (3.75,0.75) {$\Gamma_G$};

    \draw[thick, dotted] (0.5,2.5) -- (0.5,3);
    \draw[thick] (0.5,1.5) -- (0.5,2.5);
    \draw[thick] (0.5,0) -- (0.5,-1);
    \draw[thick, dotted] (0.5,-1) -- (0.5,-1.5);
    
    \draw[thick, ->] (0.15,-1.6) .. controls (1,-1.7) .. (0.6,-1.2);
    \node[below] at (0.5,-1.6) {$\rho_z$};

    \node[left] at (0.5,-1.1) {$b$};
    \node[above] at (0.3,2.5) {$a$};
    \draw[thick] (0.5,-1) .. controls (0.2,-0.5) .. (0.1,0);
    \node at (0.3,-0.2) {$...$};
    \draw[thick] (0.5,2.5) .. controls (0.9,2) .. (1.2,1.5);
    \node at (0.7,1.8) {$...$};
    \draw[thick] (0.5,2.5) .. controls (3,2) .. (3.2,1.5);
    \node at (3.5,1.8) {$...$};
    \draw[thick] (0.5,-1) .. controls (3,-0.5) .. (3.2,0);
    \node at (3.5,-0.2) {$...$};

    \draw[thick] (0.5,2.5) .. controls (3.5,2.5) .. (4,1.5);
    \draw[thick] (0.5,-1) .. controls (3.5,-1) .. (4,0);

    \draw[->, thick] (5,0.75) -- (6,0.75);

    \draw[thick] (7,1.5) rectangle (8.5,0);
    \node at (7.75,0.75) {$\rho_z(\Gamma_F)$};

    \node[left] at (7.5,-1.1) {$b$};
    \node[above] at (7.3,2.4) {$a$};
    \draw[thick] (7.5,-1) .. controls (7.9,-0.5) .. (8.2,0);
    \node at (7.8,-0.2) {$...$};
    \draw[thick] (7.5,2.5) .. controls (7.1,2) .. (7.1,1.5);
    \node at (7.3,1.7) {$...$};
    \draw[thick] (7.5,2.5) .. controls (10,2) .. (10.2,1.5);
    \node at (10.5,1.8) {$...$};
    \draw[thick] (7.5,-1) .. controls (10,-0.5) .. (10.2,0);
    \node at (10.5,-0.2) {$...$};

    \draw[thick] (10,1.5) rectangle (11.5,0);
    \node at (10.75,0.75) {$\Gamma_G$};

    \draw[thick] (7.5,1.5) -- (7.5,2.5);
    \draw[thick] (7.5,0) -- (7.5,-1);

    \draw[thick] (7.5,2.5) .. controls (10.5,2.5) .. (11,1.5);
    \draw[thick] (7.5,-1) .. controls (10.5,-1) .. (11,0);   
\end{tikzpicture}
\caption{Black graph $\Gamma_D$ (on the left) and $\Gamma_{D_{\rho_z}}$ (on the right).}
    \label{fig:mutation2_part1}
\end{figure}

\section*{Acknowledgements}
Part of this work includes material that first appeared in the doctoral dissertation of the first author, see \cite[Chapter 9]{kilicthesis}.

\bigskip
\bibliographystyle{abbrv}
\bibliography{ADV}

\end{document}